 \newif\ifchanges
\setlist[enumerate]{leftmargin=.5in}
\setlist[itemize]{leftmargin=.5in}
\newcommand{\bydef}{\,\stackrel{\mbox{\tiny\textnormal{\raisebox{0ex}[0ex][0ex]{def}}}}{=}\,}
\newcommand{\R}{\mathbb{R}}
\newcommand{\bond}{\mathrm{b}}
\renewcommand{\angle}{\mathrm{a}}
\newcommand{\Eharmonic}{E_{\mathrm{h}}}
\newcommand{\ETersoff}{E_{\mathrm{T}}}
\newcommand{\EHuber}{E_{\mathrm{H}}}
\crefname{hypothesis}{Hypothesis}{Hypotheses}
\crefname{fact}{Fact}{Facts}
  \title{Computer-Assisted Proofs for Geometry Optimization: \\
From Crystallization to Carbon Nanotubes
  }
\author{Miguel Ayala\thanks{ 
  Department of Mathematics and Statistics, McGill University, Montr\'eal, Canada, \email{miguel.ayala@mail.mcgill.ca}}
\and Rustum Choksi\thanks{Department of Mathematics and Statistics, McGill University, Montr\'eal, Canada, \email{rustum.choksi@mcgill.ca}}
\and Benedikt Wirth\thanks{Applied Mathematics M\"unster, University of M\"unster, M\"unster, Germany, \email{benedikt.wirth@uni-muenster.de}}}
\begin{document}

\maketitle

\begin{abstract}
We present a framework based on computer-assisted proofs that turns 
geometry optimization simulations for atomistic structures into mathematical
proofs. Starting from a
numerically computed approximation of a local minimizer or saddle point, we use
validated numerical computations to prove the existence of a critical point of
the potential energy close to this approximation. We demonstrate this framework in two settings. In the first, we
study capped carbon nanotubes modeled as minimizers of carbon interatomic
potentials (harmonic, Tersoff, and a Huber potential) and obtain proven
bounds on tube diameter, bond lengths, and bond angles. In particular, we show
that caps induce diameter oscillations along the tube. As a second
application, we consider a finite Lennard--Jones crystal in a face-centered
cubic (fcc) lattice and provide computer-assisted
proofs of a local minimizer
representing the perfect crystal, a local minimizer with a single vacancy
defect, and a saddle point that connects two single-vacancy configurations on
the energy landscape.
\end{abstract}

\begin{keywords}
  Geometry Optimization, Molecular Structure, Computer-Assisted Proofs, Interatomic Potentials, Carbon Nanotubes, Crystallization.
\end{keywords}

\begin{MSCcodes}
65Z05, 82B05, 49S05
\end{MSCcodes}

\section{Introduction}

Many molecular and atomistic systems, ranging from crystalline solids to carbon nanotubes, can be described as critical points of high-dimensional nonlinear energy functionals. The identification of low-energy critical points—commonly referred to as \textit{geometry optimization}—relies on numerical methods to locate local minima and the transition states that connect them. Advances in the development of interatomic potentials, together with increased computational resources, have established geometry optimization as a widely used approach across physics, chemistry, and the mathematical sciences.

While numerically found minimizers often provide both physical intuition and reliable information about the structures under study, there is no guarantee that a numerical approximation is close to an actual solution. 
We present a framework, based on computer-assisted proofs (validated numerical computations) \cite{Koch1996, Tucker2011, vanDenBerg2015, Nakao2019}, to obtain mathematical proofs of existence and stability for critical points of an energy functional from numerical approximations. These {\it proofs} can, in particular, validate differences between models and structures which might otherwise easily be attributed to nonconverged numerics or similar effects. Furthermore, they can mitigate numerical artifacts and spurious solutions caused by truncation, round-off, and implementation errors. The theory behind validated numerical computations is well-studied but, to our knowledge, this is the first application to geometry optimization for atomistic structures. 
In this paper, we focus on two classes of structures as test cases: a fairly detailed treatment of capped carbon nanotubes and a few examples for fcc crystallization. Let us first briefly discuss the motivation and scope of this article.

Our main motivation stems from questions in the mathematical community. The exact nature of minimizers of interaction potentials in mathematical physics and applied mathematics is a subject of great interest. 
Indeed, there is considerable interest and effort to validate, e.g.,
the stability of carbon nanotubes \cite{Friedrich_2019,Mainini_2017,MaMuPiSt16} or of various types of crystal structures \cite{MR4677219,MR2898772,MR3199992,MR3117623,MR3382763,MR3360741,MR2200888}. It is in the nature of things that such validation is lagging far behind physical experimental knowledge,
the above analyses all being restricted to highly simplified models
(exploiting for instance high symmetries, infinite extension, reduced dimensions, simplified atom interactions, single atom types, etc.). Less symmetric settings such as capped nanotubes or crystal defects and small size effects are currently out of reach with classical methods. Computer-assisted proofs here represent the natural next step to overcome these limitations and thereby provide new tools to the mathematical community.

On the other hand, we believe validated numerics for geometry optimization could also be relevant and useful to computational chemists and physicists (cf. \cite{Mehta_2015, Mehta_2014}). In most cases, interaction potentials lead to highly nonconvex energy landscapes, where the number of local minimizers grows at an exponential rate as the number of particles increases \cite{wales_2003, PhysRevE.59.48}. This large pool of candidates can make it easy to misinterpret numerical artifacts or nonconverged solutions as true local minimizers. This can in turn lead to incorrect conclusions about the predicted structures. In this context, our methodology allows us to measure geometric and energetic properties of minimizers with guaranteed certainty up to a very small error. The magnitude of these {\it error bars}, stemming from the theory (cf. \Cref{section:CAPs}), may be exceptionally small, providing a robust way to analyze numerical simulations for a given model. 

Is this useful for the practitioner? This depends on the potential and the minimizing state. While, mathematically, one can rigorously confirm geometric differences between critical points of the energy, it can often be the case that these differences are so small, relative to physically measurable parameters, that they are not physically meaningful. However, there are cases where they are meaningful; for example, in our carbon nanotube examples, the {\it error bars} produced by our proofs confirm that a model predicts two distinct bond lengths. To this end, we believe validated numerics have the potential to be useful in addressing certain questions of interest in computational science.

The paper is organized as follows. We begin in \Cref{section:CAPs} by introducing our computer-assisted proof (CAP) method for validating critical points of energy functions that are invariant under translations and rotations, as is typical for atomistic systems in free space. We state a validation theorem that gives sufficient 
conditions to prove the existence and stability of a critical point near a numerical 
approximation and explain how we use a computer to verify these conditions.

As a first application in \Cref{section:geometries}, we study the geometric structure of carbon nanotubes. From a
mathematical point of view, the geometry of carbon nanotubes has also been
studied using reduced models based on geometric Ansätze
\cite{MaMuPiSt16, Mainini_2017, Friedrich_2019}. However, for carbon nanotubes,
except for such reduced and simplified models, mathematical proofs about the
structure of local minimizers are scarce. This makes carbon nanotubes a natural test case for the 
application of our method. Moreover, experimental and theoretical studies indicate that geometric features such as diameter, chirality, bond lengths, bond angles, and the way the tubes are capped influence the mechanical and electronic properties of carbon nanotubes \cite{dresselhaus2001carbon, meyyappan2004carbon, Kanamitsu2002, MaMuPiSt16, Favata2016, Odom2000}. For a given model, our computer-assisted approach provides a way to rigorously study these geometric differences.

We apply our method to several nanotube geometries. As a consequence of the computer-assisted proof, we obtain rigorous bounds (error bars) on the coordinates of each atom. These bounds let us demonstrate geometric differences between tubes and derive rigorous bounds for quantities that depend on the coordinates, such as the energy. 
Using the harmonic interatomic potential \eqref{eq:energy_harmonic}, we study how different chiralities and cap configurations change the geometric features of the tube. We then use the well-known empirical Tersoff potential \eqref{eq:energy_tersoff} 
\cite{PhysRevB.37.6991,PhysRevLett.61.2879,Marchant2023} to examine how the choice of 
interaction potential influences diameters, bond lengths, and bond angles. With 
this potential, we rigorously count the number of distinct bond lengths and bond 
angles in a local minimizer, and we compare these results with those obtained 
from a mathematically motivated potential \eqref{eq:energy_epsilon} designed to 
study the role of differentiability in the geometry of the tube. 

Our method applies broadly to other geometries and interaction potentials. For example, the carbon nanotube analysis can be applied to other carbon allotropes such as 
fullertubes~\cite{Koenig_2020,Stevenson_2024} and 
cyclocarbons~\cite{Kaiser_2019}, and to more complex macromolecules, as long 
as the interaction potential is given by an analytic or piecewise-analytic 
expression. We believe the examples presented, particularly for carbon nanotubes with Tersoff bond-order potentials, serve as a proof of concept for this approach. In particular, they open a path toward adoption for more complicated modern potentials. 

Our second application, discussed in \Cref{section:lattice}, stems from crystallization. Lennard--Jones interacting systems are a standard model in the mathematical study of crystallization, 
where one expects low-energy configurations of many atoms to arrange in 
crystalline lattices, for example the face--centered cubic (fcc) lattice \cite{BlancLewin2015}. We consider \(n\) atoms interacting through the Lennard--Jones potential and study geometry 
optimization for a finite fcc lattice configuration. The field presents formidable challenges for rigorous mathematics with only a few results in the literature (see for example \cite{MR2200888, MR3360741}). We study a numerically computed local minimizer that models a stable fcc crystal, and we use our framework to turn this approximation into a mathematical proof.
The result shows that the configuration is a stable local minimizer, and that it remains locally stable after introducing a single vacancy defect. We also prove the existence of a saddle point connecting two defective crystals with neighboring vacancies. We include fcc examples with open boundary conditions to illustrate our approach in the setting of finite particle clusters. Finite particle clusters have been studied extensively, for example in the context of Lennard--Jones clusters and the Thomson problem \cite{Wales_97, Mainini_2017, Wales_2004}. We end the section with an example for periodic boundary conditions, where a repeating cell represents an infinite crystal.

We conclude the paper with a brief discussion of the limitations of the approach and of the numerical and physical challenges involved in the broader integration of computer-assisted proofs into the contemporary use of geometry optimization for atomistic structures.

\section{Computer-assisted Proofs for Geometry Optimization}
 \label{section:CAPs}
We consider a system of \( n \) points in $\mathbb{R}^3$, where each point represents an atom. The positions of these atoms are given by
\[
    p_i = (x_i, y_i, z_i) \in \mathbb{R}^3, \quad i = 1, \dots, n.
\]
The full set of \( n \) atoms, the atom cloud, is denoted as \( p \in \mathbb{R}^{3n} \), where
\[
    p \bydef  (x_1, \dots, x_n, y_1, \dots, y_n, z_1, \dots, z_n).
\]
In this paper, we compute and prove the existence of critical points of an
energy function \(E\) modeling interactions between atoms. These points
are the solutions of the zero-finding problem
\[
    F(p) \bydef \nabla E(p) = 0.
\]
In particular, we consider energy functions
\(
E: \mathbb{R}^{3n} \to \mathbb{R}
\)
that depend only on the distances between atoms and the angles formed between pairs of atoms. Since rotations and translations leave distances and angles unchanged, the energy functions remain invariant under these transformations. 
In this work we focus on two types of critical points: local minimizers that
correspond to low-energy geometries and saddle points that correspond to
transition states. Because this problem is posed in a high-dimensional space,
practitioners in geometry optimization rely on numerical simulations to compute
and study solutions. 

Our central question is: To what extent can computer simulations rigorously
describe the geometry of a critical point given a potential energy function? To
address this, we use computer-assisted proofs, also called validated numerics,
which produce results at the level of mathematical proof
\cite{Koch1996, Tucker2011, vanDenBerg2015, Nakao2019}.

Suppose \(\bar{p} \in \mathbb{R}^{3n}\) is an approximation to a zero of the gradient of some energy function \(E\). We want to show, using a fixed-point argument, that a unique true zero of \(F\) exists within a ball centered at \(\bar{p}\) for some radius \(r\). In particular, we will use the following result, a Newton-Kantorovich-type theorem for finite-dimensional problems. This theorem is related to the Krawczyk operator approach \cite{MR0255046} and the interval Newton method \cite{MR0231516}. In the following, $\|\cdot\|$ denotes both a norm on $\mathbb R^m$ as well as the induced matrix norm, while $B_r(\bar x)$ denotes the open norm-ball around $\bar x$ of radius $r$, and $\overline B_r(\bar x)$ denotes the corresponding closed norm-ball.
\begin{theorem}[Validated zero]
    \label{theorem:validation}
    Let $f:\R^{m} \to \R^{m} $ be differentiable and $\bar{x} \in \R^{m}$, $A\in\R^{m\times m}$. Suppose that for some $r^{*}>0$ and $Y,Z(r^*)>0$ we have
    \[
        \left \|  Af(\bar{x})  \right \| \leq Y,
        \quad
        \sup_{x \in \overline B_{r^{*}}(\bar{x})} \left \|I - ADf(x) \right \| \leq Z(r^{*})<1,
    \]
    \begin{equation*}
                 0< r_{\min} \bydef \frac{Y}{1-Z(r^{*})}< r^{*}.
    \end{equation*}        
    Then for any $r$ in the interval $[r_{\min}, r^{*}]$ there exists a unique zero of $f$ in $\overline {B}_r(\bar{x})$. 
\end{theorem}
\begin{proof}
    Suppose $r_{\min} <r\leq r^*$. We will show that the Newton-like operator \(T:\mathbb{R}^{m} \to \mathbb{R}^{m}\),
    \begin{equation*}
        T(x) \bydef x - A f(x),
    \end{equation*}
    has a unique fixed point in the ball $\overline B_{r}(\bar x)$ centered at the approximate zero \(\bar{x}\). 
    First, we prove that \(T\) is a contraction. 
    For any \(x, y \in \overline{B}_{r}(\bar{x})\) we have
    \[
        \lVert T(x) - T(y)\rVert \leq \sup_{z \in \overline {B}_{r}(\bar{x})} \lVert DT(z) \rVert
        \lVert x -y\rVert \leq Z(r^*)\lVert x-y\rVert.
    \]
    Next, we show that \(T\) maps \(\overline B_{r}(\bar{x})\) into itself. Let $x\in\overline B_{r}(\bar{x})$ and observe that
    \begin{multline*}
        \lVert T(x) - \bar{x}\rVert \leq 
        \lVert T(x) - T(\bar{x})\rVert + \lVert T(\bar{x}) - \bar{x}\rVert \\
        \leq \sup_{z \in \overline {B}_{r}(\bar{x})}  \lVert DT(z) \rVert \lVert x - \bar{x} \rVert + Y
        \leq Z(r^*) \lVert x - \bar{x} \rVert + Y
        < r.
    \end{multline*}
    By the Contraction Mapping Theorem \cite{Chicone2006}, \(T\) has a unique fixed point in \(\overline{B}_{r}(\bar{x})\). 
    Finally,
    observe that the matrix \(A\) is invertible.
    Indeed, otherwise, pick a unit-norm vector $v$ from the kernel of $ADf(\bar x)$; then $1=\|v\|=\|(I-ADf(\bar x))v\|\leq Z(r^*)\|v\|=Z(r^*)<1$ yields a contradiction.
    The invertibility of $A$ implies that \(f\) has a true zero in \(\bar{B}_{r}(\bar{x})\).
\end{proof}
The formulation we use is standard in validated numerics and has been applied recently to finite-dimensional problems, for example in the study of relative equilibria in the \(N\)-body problem \cite{MR4336016,MR4028598,MR3917433}. We include the short proof above for completeness. 

In practice, $\bar{x}$ will be a numerical approximation of a zero of $f$, and $A$ will be a numerical approximation to the inverse of $Df(\bar{x})$. Furthermore, we will employ the supremum norm $\|\cdot\|=\|\cdot\|_\infty$ since then the required estimates can be rigorously verified via classical interval arithmetic. In what follows we say that we have \textit{validated} the approximate zero \( \bar{x} \)
if, by applying \Cref{theorem:validation}, we prove the existence of a true zero of \(f\) in a neighborhood of \( \bar{x} \).

In our case, we cannot apply \Cref{theorem:validation} directly to find zeros
of $\nabla E$ because the energy function $E$ has continuous invariances. Indeed, $E$ is invariant under rotations and translations so that the zeros of $\nabla E$ form a continuous six-dimensional set, obtained by rotating or translating any critical point. \Cref{theorem:validation}, in contrast, provides conditions guaranteeing a unique, isolated zero within a small ball.

To solve this problem, we extend the system by six additional equations that enforce orthogonality to the invariant directions near a (given, but arbitrary) numerical approximation \( \bar{p} = (\bar{x}, \bar{y}, \bar{z}) \in \mathbb{R}^{3n} \) of a critical point. To make the system square again, we further introduce six additional variables that act like Lagrange multipliers to the new constraints. The extended zero-finding problem
\(
    F_{\bar{p}}: \mathbb{R}^{3} \times \mathbb{R}^{3} \times \mathbb{R}^{3n} \to \mathbb{R}^{3} \times \mathbb{R}^{3} \times \mathbb{R}^{3n}
\)
is given by
\begin{equation}
    \label{eq:validation_map}
    F_{\bar{p}}
    (
    \tau
    ,
    \rho
    ,
    p
    )
    \bydef
    \begin{bmatrix}
        T_{1} \cdot  \left( \bar{p} - p \right)
        \\
        T_{2} \cdot \left( \bar{p} - p \right)
        \\
        T_{3} \cdot \left( \bar{p} - p \right)
        \\
        p^TR_{1}\bar{p}
        \\
        p^TR_{2}\bar{p}
        \\
        p^TR_{3}\bar{p}
        \\
        \nabla E(p)
        +
        S(\tau, \rho, p)
    \end{bmatrix},
\end{equation}
where $R_i\in\mathbb{R}^{3n\times 3n}$  for $i = 1,2,3$ denote the infinitesimal generators of rotations in 3D
\[
    R_1 = \begin{pmatrix}
        0 & -I & 0 \\
        I & 0  & 0 \\
        0 & 0  & 0
    \end{pmatrix},
    \quad
    R_2 = \begin{pmatrix}
        0 & 0 & 0  \\
        0 & 0 & -I \\
        0 & I & 0
    \end{pmatrix}
    \quad and
    \quad
    R_3 = \begin{pmatrix}
        0  & 0 & I \\
        0  & 0 & 0 \\
        -I & 0 & 0
    \end{pmatrix}
\]
(with $I\in\R^{n\times n}$ the $n\times n$ identity matrix) and $T_i \in \mathbb{R}^{3n}$ for $i=1,2,3$ denote the infinitesimal generators of translations
\[
    [T_1 ]_{i} =
    \left\{\begin{matrix}
        1 &  \text{if } 0\leq i \leq n \\
        0 & \text{otherwise}      \\
    \end{matrix}\right.,
    \quad
    [T_2 ]_{i} =
    \left\{\begin{matrix}
        1 & \text{if } n+1 \leq i \leq 2n \\
        0 & \text{otherwise}         \\
    \end{matrix}\right.
    ,\quad
    [T_3 ]_{i} =
    \left\{\begin{matrix}
        1 &  \text{if } 2n+1 \leq i \leq 3n \\
        0 & \text{otherwise}         \\
    \end{matrix}\right. .
\]
Finally, the function $S:  \mathbb{R}^{3} \times \mathbb{R}^{3} \times \mathbb{R}^{3n} \to  \mathbb{R}^{3n}$ is defined as

\[
    S(\tau, \rho, p) =
    \sum_{i=1}^{3} \tau_i T_i + \sum_{i=1}^{3} \rho_i R_i p.
\]

The following result shows the equivalence between the zeros of \( F_{\bar{p}} \) and of $\nabla E$.
\begin{theorem}[Removal of invariances]
    \label{theorem:equivalence}
    Suppose \( F_{\bar{p}}(\tau,\rho,p) = 0 \). Then also \( \nabla E(p) = 0 \).
    \begin{proof}
        The gradient of the energy $E$ is orthogonal to the invariant directions. That is, for each symmetry generator \( i = 1,2,3 \) we have
        \[
            \nabla E(p)^TR_{i}p = 0, \quad \nabla E(p)^TT_{i} = 0.
        \]
Hence we have

\begin{equation}
\label{eq:gradE_dot_S}
\nabla E(p) \cdot S(\tau, \rho, p) =
\nabla E(p) \cdot \left( \sum_{i=1}^{3} \tau_i T_i + \sum_{i=1}^{3} \rho_i R_i p \right) = 0,
\end{equation}

        which implies
        \[
            0 = \left[ \nabla E(p) + S(\tau, \rho, p) \right] \cdot S(\tau, \rho, p) = S(\tau, \rho, p) \cdot S(\tau, \rho, p) = \lVert S(\tau, \rho, p) \rVert_{2}^2
        \]
        and thus \( S(\tau, \rho, p) = 0 \). Therefore, \( \nabla E(p) = 0 \).
    \end{proof}
\end{theorem}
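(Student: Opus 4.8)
The plan is to read off $\nabla E(p)=0$ from the last block of the equation $F_{\hat p}(\tau,\rho,p)=0$ together with the rotation and translation invariance of $E$, written in infinitesimal form. Note that the six constraint equations involving $\hat p$ play no role in this implication; they are only needed later, in conjunction with \Cref{theorem:validation}, to make the zero set of the extended system isolated rather than six-dimensional.

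First I would translate the continuous invariances of $E$ into orthogonality relations. Translation invariance means $E(p+t\,T_i)=E(p)$ for all $t\in\R$ and $i=1,2,3$; differentiating at $t=0$ gives $\nabla E(p)\cdot T_i=0$. Rotation invariance means $E(e^{tR_i}p)=E(p)$ for all $t$, since the skew-symmetric matrices $R_i$ generate the $SO(3)$ action on the atom cloud; differentiating at $t=0$ gives $\nabla E(p)\cdot(R_i p)=0$. Thus $\nabla E(p)$ is orthogonal to each of the six vectors $T_1,T_2,T_3,R_1p,R_2p,R_3p$ spanning the tangent space of the symmetry orbit through $p$.

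Next, the last block of $F_{\hat p}(\tau,\rho,p)=0$ reads $\nabla E(p)+S(\tau,\rho,p)=0$, and by definition $S(\tau,\rho,p)$ is exactly a linear combination of those six vectors, so the orthogonality relations give $\nabla E(p)\cdot S(\tau,\rho,p)=0$. Taking the inner product of $\nabla E(p)+S(\tau,\rho,p)=0$ with $S(\tau,\rho,p)$ then yields $0=\nabla E(p)\cdot S(\tau,\rho,p)+\lVert S(\tau,\rho,p)\rVert^2=\lVert S(\tau,\rho,p)\rVert^2$, hence $S(\tau,\rho,p)=0$ and therefore $\nabla E(p)=0$. There is no genuine obstacle here: once invariance is written infinitesimally, the conclusion is a one-line projection argument. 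The only point needing a little care is justifying the two orthogonality identities, i.e. confirming that the $T_i$ and $R_i$ as defined really are the infinitesimal generators of the symmetry group leaving $E$ invariant and that $E$ is differentiable along these orbits — which is automatic under the standing assumption that $E$ depends smoothly on bonded distances and bond angles, since those quantities are themselves translation- and rotation-invariant.
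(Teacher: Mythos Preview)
Your argument is correct and follows essentially the same route as the paper: derive the orthogonality of $\nabla E(p)$ to the six invariant directions from the infinitesimal form of translation and rotation invariance, then take the inner product of $\nabla E(p)+S=0$ with $S$ to conclude $S=0$ and hence $\nabla E(p)=0$. Your version is slightly more explicit in justifying the orthogonality identities and in noting that the six constraint equations are not used here, but otherwise the proofs coincide.
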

A similar numerical strategy appears in \cite{doi:10.1137/19M1269865,MUNOZALMARAZ20031} to study periodic orbits in Hamiltonian systems. In the field of computer-assisted proofs, the technique of adding variables and equations to remove continuous invariances from a zero-finding problem is also known as introducing unfolding parameters (e.g.\ \cite{calleja2024}). Theorem~\ref{theorem:equivalence} removes the degeneracies at critical points caused by continuous symmetries. In contrast, discrete symmetries such as permutations require no special treatment. They do not generate continuous families of critical points and therefore do not lead to zero eigenvalues in the Hessian of the energy.

Once an approximate local minimizer has been validated as a critical point of
the energy, we use the following result to prove that it is indeed a local
minimum.

\begin{corollary}[Validated local minimizers of $E$]\label{theorem:minimizer}
Let $\bar x=(\bar \tau,\bar \rho,\bar p),r^*,r_{\min},Y,Z(r^*),A$ as in \Cref{theorem:validation} for $f=F_{\bar{p}}$ and $\|\cdot\|=\|\cdot\|_\infty$, and abbreviate by $M_i(p)\in\R^{i\times i}$ the upper left submatrix of

\[
H(p) \bydef \nabla^2 E(p) + \sum_{i=1}^{3} T_i T_i^{\top} + \sum_{i=1}^{3} (R_i p)(R_i p)^{\top}.
\]

If $(M_i^{-1}(p))_{ii}>0$ for all $p\in B_{r_{\min}}(\bar{x})$ and $i=1,\ldots,3n$, then $E$ has a unique (up to continuous invariances) local minimum in $B_r(\bar p)$ for all $r\in[r_{\min},r^*]$.
\end{corollary}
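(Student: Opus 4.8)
The plan is to chain \cref{theorem:validation} and \cref{theorem:equivalence} to obtain a genuine critical point $p^*$ of $E$ inside the ball, and then to promote it to a local minimum (modulo the rotation--translation invariances) by showing that the diagonal sign conditions force the augmented Hessian $H(p^*)$ to be positive definite. First I would apply \cref{theorem:validation} to $f=F_{\hat p}$: it yields the unique zero $(\tau^*,\rho^*,p^*)\in\bar B_{r_{\min}}(\bar x)\subseteq\bar B_r(\bar x)$, and since the projection onto the $p$--coordinates is $1$--Lipschitz for $\|\cdot\|_\infty$ we get $p^*\in\bar B_{r_{\min}}(\bar p)\subseteq B_r(\bar p)$ (for the endpoint $r=r_{\min}$ one may enlarge $r_{\min}$ by an arbitrarily small amount, which is anyway the situation in practice). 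By \cref{theorem:equivalence}, $\nabla E(p^*)=0$ and $S(\tau^*,\rho^*,p^*)=0$, so $p^*$ is a critical point of $E$ lying in $B_r(\bar p)$; note $H(p)$ is well defined near $p^*$ because $E$ is $C^2$ there, as already implicitly required by the differentiability of $F_{\hat p}$.

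Next I would convert the diagonal conditions into positive definiteness of $H(p)$. The key identity is Cramer's rule in the form $(M_i^{-1}(p))_{ii}=\det M_{i-1}(p)/\det M_i(p)$: deleting the last row and column of $M_i(p)$ produces exactly the upper--left $(i-1)\times(i-1)$ block $M_{i-1}(p)$ of $H(p)$, with the convention $\det M_0(p):=1$ (the hypothesis of course presupposes each $M_i(p)$ invertible, and $M_{3n}(p)=H(p)$). Starting from $\det M_1(p)=1/(M_1^{-1}(p))_{11}>0$ and inducting via $\det M_i(p)=\det M_{i-1}(p)/(M_i^{-1}(p))_{ii}>0$, every leading principal minor $\det M_1(p),\dots,\det M_{3n}(p)$ of the symmetric matrix $H(p)$ is positive; Sylvester's criterion then shows $H(p)$ is positive definite for every $p\in B_{r_{\min}}(\bar p)$, in particular $H(p^*)$ is positive definite.

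Finally I would pass from positive definiteness of $H(p^*)$ to minimality. Write $H(p^*)=\nabla^2E(p^*)+P$ with $P:=\sum_{i=1}^3 T_iT_i^\top+\sum_{i=1}^3(R_ip^*)(R_ip^*)^\top$, a positive semidefinite matrix whose range is the span $V$ of the six infinitesimal generators $T_1,T_2,T_3,R_1p^*,R_2p^*,R_3p^*$ --- i.e.\ the tangent space at $p^*$ to its rotation--translation orbit $\mathcal O$ --- and whose kernel is $V^\perp$. Differentiating the identities $\nabla E(p)^\top T_i\equiv0$ and $\nabla E(p)^\top R_ip\equiv0$ recalled in the proof of \cref{theorem:equivalence}, and using $\nabla E(p^*)=0$, gives $\nabla^2E(p^*)T_i=0$ and $\nabla^2E(p^*)R_ip^*=0$, so $V\subseteq\ker\nabla^2E(p^*)$; hence for $v\in V^\perp\setminus\{0\}$ we have $v^\top\nabla^2E(p^*)v=v^\top H(p^*)v>0$, i.e.\ $\nabla^2E(p^*)$ is positive definite on the normal space $V^\perp$. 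Combined with $\nabla E(p^*)=0$, this makes $p^*$ a strict local minimum of the restriction of $E$ to any slice transverse to $\mathcal O$; as $E$ is constant along $\mathcal O$, the standard slice argument (the map $(g,q)\mapsto g\cdot q$ from group elements near the identity and slice points near $p^*$ is a submersion onto a neighborhood of $p^*$) upgrades this to: $p^*$ is a local minimum of $E$ whose nearby minimizer set equals $\mathcal O$, i.e.\ a local minimum unique up to the continuous invariances. The uniqueness within all of $B_r(\bar p)$ is then inherited from the uniqueness clause of \cref{theorem:validation}, once one checks that every critical point of $E$ near $\hat p$ has a unique orbit representative satisfying the six constraint equations of \eqref{eq:validation_map}, hence corresponds to a zero of $F_{\hat p}$ near $\bar x$.

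The linear algebra in the middle step (Cramer plus Sylvester) is routine. I expect the real care to lie in the last paragraph: making the slice argument precise --- turning ``$\nabla^2E(p^*)$ positive definite on $V^\perp$'' into ``local minimum up to invariances'' --- and, for the uniqueness assertion, justifying the bijection between $E$--critical points near $\hat p$ and zeros of $F_{\hat p}$ near $\bar x$ via the implicit function theorem. Both rely on the six generators $T_1,T_2,T_3,R_1p^*,R_2p^*,R_3p^*$ being linearly independent, so that the constraints cut out a genuine $(3n-6)$--dimensional local slice of the group action; this holds for the honestly three--dimensional atom clouds considered here but would degenerate for, e.g., collinear configurations.
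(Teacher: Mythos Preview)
Your proposal is correct and follows essentially the same route as the paper: chain \cref{theorem:validation} and \cref{theorem:equivalence} to obtain a critical point, use the cofactor identity $(M_i^{-1})_{ii}=\Delta_{i-1}/\Delta_i$ together with induction and Sylvester's criterion to establish positive definiteness of $H(p)$, and then exploit that the invariant directions $T_i,R_ip$ lie in $\ker\nabla^2E(p)$.

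The only substantive difference is in how minimality is extracted from positive definiteness of $H$. The paper argues that $H(p)\succ0$ (hence $\nabla^2E(p)\succeq0$) for \emph{every} $p$ in the small ball, so $E$ is convex there modulo the invariances and the critical point is automatically a local minimum; this sidesteps any slice argument. You instead work only at $p^*$, deduce that $\nabla^2E(p^*)$ is positive definite on $V^\perp$, and then invoke the second-order sufficient condition via a transverse slice. Both are valid; the paper's version is shorter, yours is more explicit about what ``unique up to invariances'' means and about the nondegeneracy assumption on the six generators. You also correctly flag a subtlety the paper passes over: uniqueness of the critical point in $B_r(\bar p)$ does not follow immediately from \cref{theorem:validation,theorem:equivalence} alone, since \cref{theorem:equivalence} is one-directional; your proposed orbit-representative argument is the right way to close that gap.
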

\begin{proof}
\Cref{theorem:validation,theorem:equivalence} already imply the existence of a critical point of $E$ in $B_r(\bar p)$, so it remains to show the minimizing property.
To this end it suffices to show positive semidefiniteness of $\nabla^2 E(p)$ for all $p\in B_r(\bar p)$.
This in turn is implied by positive definiteness of $H(p)$, since $T_1,T_2,T_3,R_1p,R_2p,R_3p$ (the invariant directions of energy $E$) lie in the kernel of $\nabla^2E(p)$.
By Sylvester's criterion, \( H(p) \) is positive definite if and only if all its leading principal minors $\Delta_i\bydef\det M_i(p)$, $i=1,\ldots,3n$, are positive.
Using the cofactor expansion for the inverse of \( M_i(p) \), we have
\[
M_i(p)^{-1} = \frac{1}{\Delta_i} \operatorname{cof}(M_i(p))^\top
\]
and in particular
\[
\frac{\Delta_{i-1}}{\Delta_i}=[M_i(p)^{-1}]_{ii} >0
\]
for the \((i,i)\)-entry of \( M_i^{-1}(p) \).
With \( \Delta_0 = 1 > 0 \) it follows by induction that \( \Delta_i > 0 \) for all $i$, as desired.
\end{proof}

There exist several alternative methods to check the positivity of a matrix such as~\cite{Rump2006}. Another possibility to obtain bounds for the eigenvalues is the Gershgorin circle theorem \cite{GolubVanLoan2013}, which provides intervals enclosing the spectrum (as in \Cref{theorem:lattice saddle}). For symmetric matrices, a more refined bound can be obtained by methods that validate each eigenvalue individually using a Newton operator argument as presented in~\cite{Rump2001}. 

To conclude this section we describe our computer-assisted technique, the practical application of \Cref{theorem:minimizer}, explicitly. We begin by computing an approximate minimizer $\bar{p}$ of $E$ using the BFGS quasi-Newton algorithm, terminating at six digits of accuracy. We then refine $\bar{p}$ by applying Newton's method to the function $F_{\bar{p}}$ from \eqref{eq:validation_map} with initialization $(0,0,\bar{p})$, yielding an approximate zero $\bar x=(\bar\tau,\bar\rho,\bar p)$ of $F_{\bar{p}}$ with ten to twelve digits of accuracy.

To find $r^*,Y,Z(r^*)$ verifying the conditions of \Cref{theorem:minimizer}, we use a computer in the following way. First, we compute matrix $A$ from \Cref{theorem:validation} as the numerical inverse of $DF_{\bar{p}}(\bar{x})$. We implement the function $F_{\bar{p}}$ and its Jacobian so that they can be evaluated using intervals instead of single floating-point numbers \cite{MR0231516}. More specifically, we use interval arithmetic to control round-off errors and work with basic functions (such as sums, products, and standard elementary functions) whose implementations are rigorous in the sense that they return intervals guaranteed to contain the exact result. 

We call an evaluation of a mathematical function \emph{rigorous} when, using
interval arithmetic, we compute an interval that is guaranteed to contain the
exact real value of the function at the given input. In particular, a rigorous bound \( Y \) is computed by evaluating
\[
    \left\| AF_{\bar{p}}(\bar{x}) \right\|_{\infty}
\]
in interval arithmetic as follows. Starting from the numerical approximation \( \bar{x} \), we first enclose it in
an interval vector whose entries are the smallest intervals with floating-point endpoints that contain the corresponding components of \( \bar{x} \). Evaluating \( \| A F_{\bar{p}}(\bar{x}) \|_{\infty} \) using this interval vector and interval arithmetic produces an interval that contains the true value of \( \| A F_{\bar{p}}(\bar{x}) \|_{\infty} \). We take its right endpoint as our rigorous bound \( Y \).

Observe that the bound $Y$ depends on the quality of the numerical approximation \(\bar{x}\). In practice, the bound can be improved by applying Newton's method using the validation map, and in principle it can be made as small as desired, up to the limits imposed by the employed floating-point precision.

The rigorous bound \( Z(r^{*}) \) for given $r^*$ is then obtained as the right endpoint of the rigorous evaluation of $\|I-ADF_{\bar{p}}(X)\|_\infty$, in which the entries of $X$ are intervals of radius $r^*$ centered at the entries of $\bar x$. Since the ball in \(\mathbb{R}^n\) with the infinity norm around \(\bar{p}\)
equals the Cartesian product of \(n\) intervals of radius \(r\) centered at the
components of \(\bar{p}\), we can represent it directly using interval
arithmetic.

Finally, we compute an upper bound on $r_{\min}$ by calculating \(Y/(1-Z(r^{*}))\) in interval arithmetic.
It remains to state how $r^*$ is chosen (the conditions in \Cref{theorem:validation} may hold for various values of \(r^{*}\)). In practice, we typically look for the biggest possible radius \(r^{*}\) such that approximately $Z(r^*)\leq\frac12$: This leads to a large uniqueness interval
\begin{equation}
\label{eq:radii_interval}
\left[  r_{\min}=\tfrac{Y}{1-Z(r^{*})}, r^{*} \right]
\end{equation}
in \Cref{theorem:minimizer} (increasing $r^*$ much further will not be possible anyway due to the condition $Z(r^*)<1$) and at the same time a high accuracy $r_{\min}$ of the approximated energy minimizer $\bar p$ (the accuracy could at most increase twofold by making $Z(r^*)$ arbitrarily small).
Such $r^*$ is found by a bisection method.

The positivity of all $(M_i(p)^{-1})_{ii}$ for $p\in B_{r_{\min}}$ is again checked by computing $(M_i(P)^{-1})_{ii}$ in interval arithmetic for $P_j=[p_j-r_{\min},p_j+r_{\min}]$ and validating whether the resulting interval is positive.

The code implementation and the data associated with the computer-assisted proofs in this paper are included in \cite{github_codes}. All implementations use the Julia programming language. Interval arithmetic is handled with the Julia libraries \textit{IntervalArithmetic.jl} \cite{IntervalArithmetic.jl} and \textit{Arblib.jl} \cite{Johansson2017Arb}.
For better readability, all numbers resulting from the rigorous numerics will be given with only 5 decimal places. We are now ready to explore various applications of \Cref{theorem:validation}.

\begin{figure}[ht]
    \setlength\unitlength{\textwidth}
    \includegraphics[width=.70\unitlength,trim=575 445 450 425,clip]{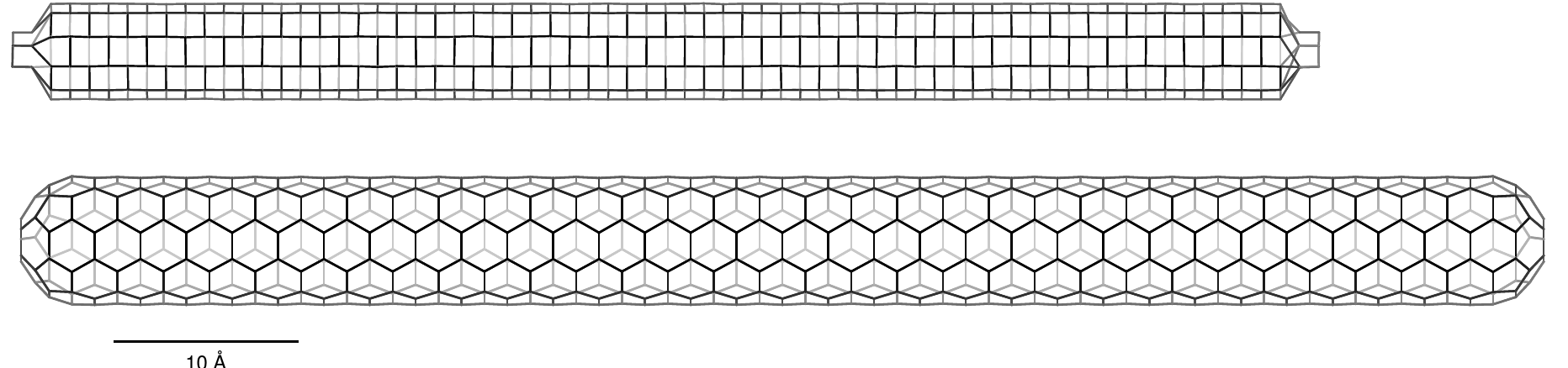}
    \includegraphics[width=.25\unitlength,trim=80 40 80 40,clip]{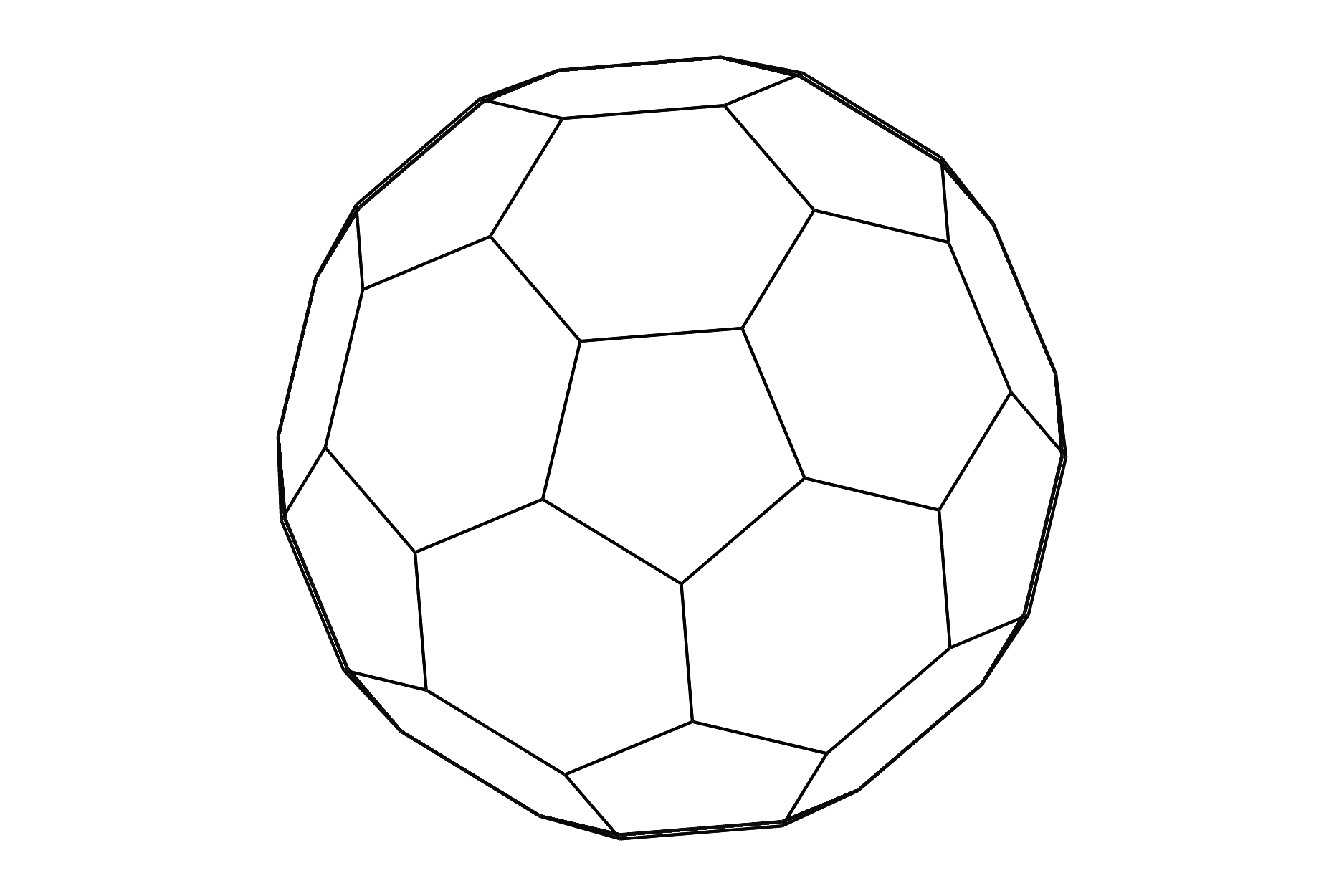}
    \begin{picture}(0,0)(1,0)
    \put(.054,.012){\rule{.0885\unitlength}{2pt}}
    \put(.07,-.015){\colorbox{white}{\small10\,\AA}}
    \end{picture}

    \caption{Top: Initialization of the BFGS gradient descent to find a local minimizer for a \((5,5)\)-armchair nanotube using the harmonic potential $\Eharmonic$.
    Bottom and right: Validated numerical approximation of the local minimizer (side and top view).
}
    \label{fig:nanotube_5_5_combined}
\end{figure}

\section{Carbon Nanotube Geometries}
\label{section:geometries}
As our first class of molecular structures, we apply our method to the study of the geometry of carbon nanotubes. The mathematical idealization of a carbon nanotube comes from rolling graphene. Graphene is a two-dimensional arrangement of carbon atoms where each atom forms three bonds with its neighbors. These bonds create a hexagonal lattice with a bond length of approximately $l=1.42$ \AA{} and an angle of \(2\pi/3\) between them. The graphene sheet can be fully described by two lattice vectors, \(a_1\) and \(a_2\), both with a length of \( a = l \sqrt{3} \)
(see \Cref{fig:graphene}). By rolling a graphene sheet into a cylinder, we obtain a carbon nanotube \cite{DRESSELHAUS1995883,meyyappan2004carbon}. The geometric classification of carbon nanotubes is based on the periodicity of this rolling process, which is given by a vector of the form
\[
    \mathbf{C} = na_1 + ma_2
\]
with integers \( n, m \). The resulting nanotube is said to have chirality \((n, m)\).

Two chiralities stand out:
Armchair nanotubes \((n, n)\) have a perfectly symmetric structure along their circumference, where the carbon-carbon bonds form a continuous chain resembling the armrests of a chair. Zigzag nanotubes (\(n, 0\)) feature carbon atoms aligned in straight rows along the tube's circumference, forming a repeating zigzag pattern along the edge. All other nanotubes \((n, m)\) with \(0 < |m| < n\), called chiral, exhibit a helical arrangement of atoms, breaking mirror symmetry and resulting in a more complex structure.

\begin{figure}[!ht]
  \centering

  \begin{minipage}[b]{0.5\textwidth}
    \centering
    \begin{overpic}[width=\linewidth,trim=60 120 100 100,clip]{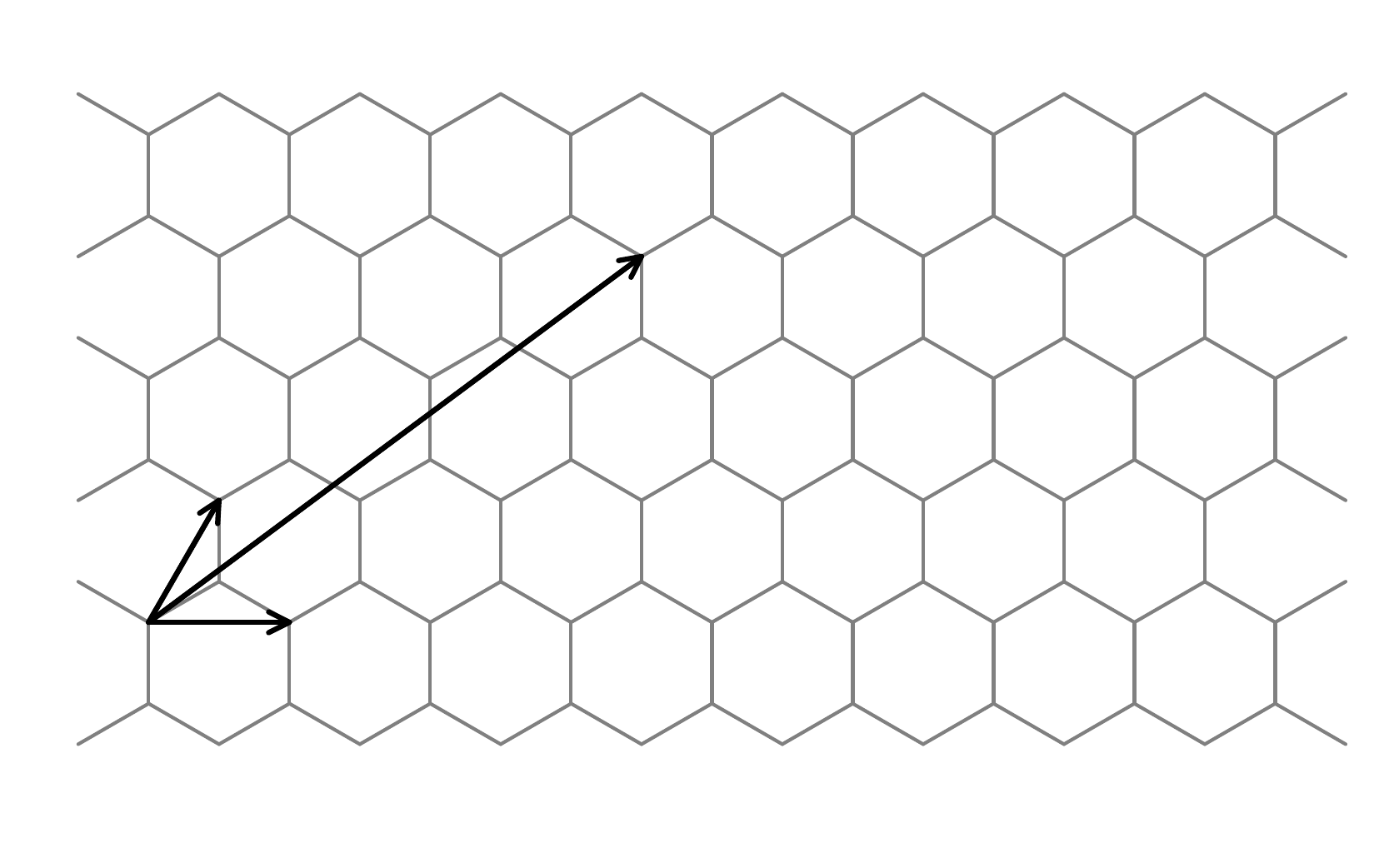}
      \put(50,32){\footnotesize $\mathbf{C} = n {a}_1 + m {a}_2$}
      \put(18,3){\footnotesize ${a}_1$}
      \put(3,13){\footnotesize ${a}_2$}
    \end{overpic}
  \end{minipage}
  \begin{minipage}[b]{0.35\textwidth}
    \includegraphics[width=\linewidth,clip]{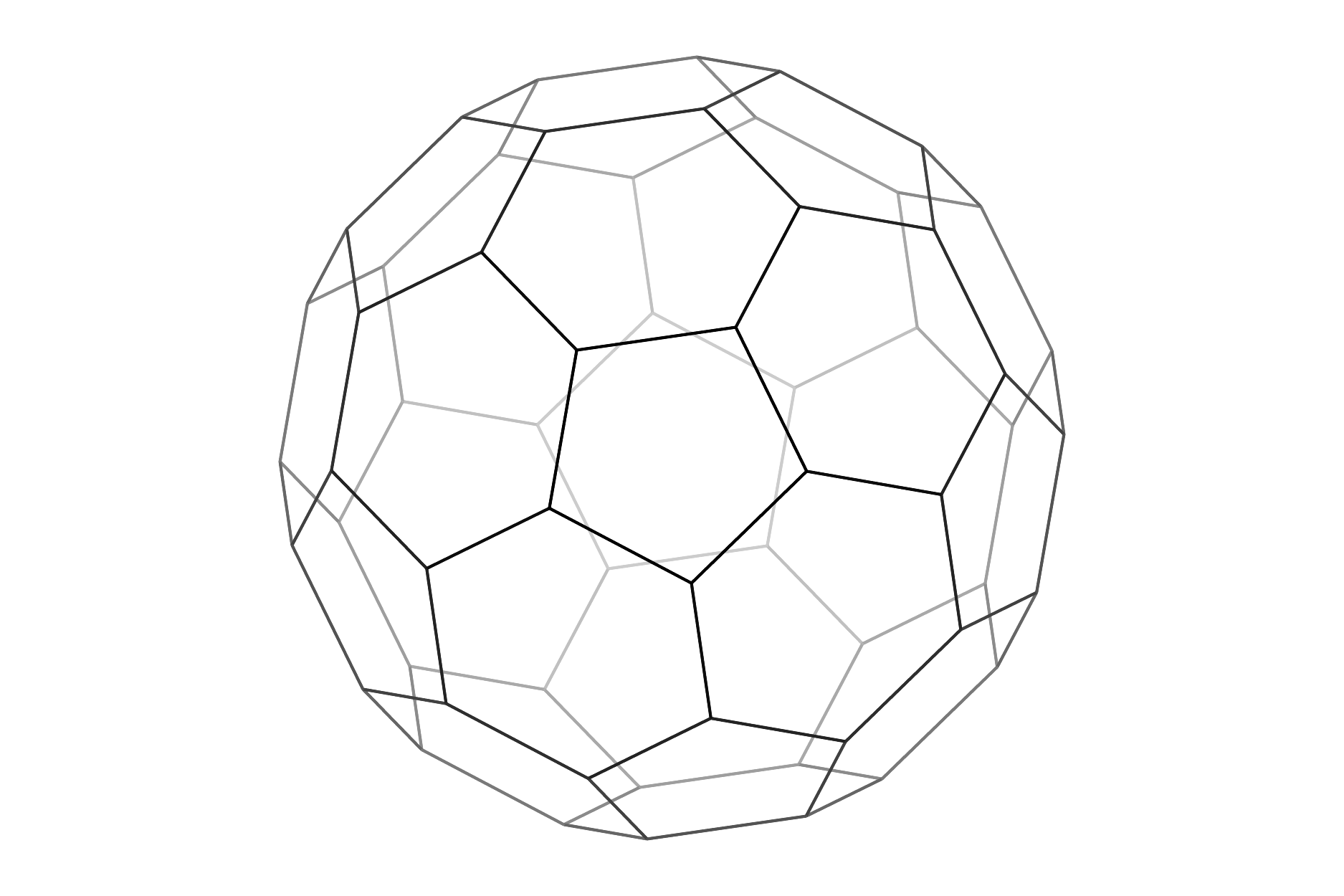}
  \end{minipage}

  \caption{Left: A portion of a graphene sheet with the chiral vector \(\mathbf{C} = n {a}_1 + m {a}_2\) indicated. Rolling the sheet along \(\mathbf{C}\) forms a carbon nanotube with chirality \((n, m)\). Right: Structure of a \(\mathrm{C}_{60}\) fullerene.}
  \label{fig:graphene}
\end{figure}

To describe a nanotube mathematically, we model each atom as a point atom in three-dimensional space. Interactions between atoms are represented by an \emph{interatomic potential}, which gives us a model for the total energy of the system as a function of the atomic positions. This potential can be obtained from quantum mechanical calculations or from experimental data. In practice, analytic approximations of this potential are used to model the interatomic forces.

The geometry of a carbon structure is then obtained by minimizing an analytic total potential energy with respect to the atomic positions. This procedure is computational, since it requires handling a nonlinear energy function in a high-dimensional configuration space. The study of simulations produced by such minimization fits directly into the rigorous geometry optimization framework presented in \Cref{section:CAPs}. We now examine a scenario in which the potential energy function remains fixed while we modify the geometry of the structures. In this context, we investigate the structural differences that can be rigorously observed when varying the chirality or cap configuration while employing the same interatomic potential.

For now, we focus on interactions that arise from atomic bonding. This is particularly relevant for materials like carbon nanotubes and graphene, where bond lengths and angles play an important role in determining structural and electronic properties. To this end we employ the energy $\Eharmonic$ from \eqref{eq:energy_harmonic} with harmonic two-atom and three-atom interaction potentials,
\begin{equation}
    \label{eq:energy_harmonic}
    \Eharmonic(p) \bydef  \sum_{(i,j) \in B} E_{\bond}(r_{ij})  + \sum_{(i,j,k) \in A} E_{\angle}(\theta_{ijk}),
    \quad\text{with }
    E_{\bond}(r) = k_{b} (r - r_0)^2,\;
    E_{\angle}(\theta) = k_{\theta} (\theta - \theta_0)^2,
\end{equation}
For the modeling of carbon atoms, we use the parameters \( k_{b} = 469 \) kcal/mol/\AA\(^2\), \( r_0 = 1.44 \) \AA, \( k_{\theta} = 63 \), and \( \theta_0 = 2\pi/3 \), taken from \cite{SaitoBook, Weiner1986}. The set \(B\) is defined as follows. For a given carbon nanotube geometry, for example the one illustrated in
Figure~\ref{fig:nanotube_5_5_combined}, we index the atoms in the tube by
\(
    V = \{1,\dots,N\}
\)
and define the \emph{neighbor list} by
\[
B \;=\; \bigl\{\,(i,j)\subset V \;:\; i\neq j \text{ and atoms } i \text{ and } j \text{ are first neighbors (separated by one bond) }\,\bigr\}
\]
and the \emph{angle set} as
\[
A \;=\; \bigl\{\, (i,j,k)\in V^3 \;:\; i,j,k \text{ distinct and } ij, jk \in B \,\bigr\}.
\]
We define the bond length and bond angle as
\[
  r_{ij} = \|p_i - p_j\| \quad \text{for } (i,j) \in B,
  \qquad
  \theta_{ijk}
  = \arccos\!\left(
    \frac{(p_i - p_j)\cdot(p_k - p_j)}{\|p_i - p_j\|\,\|p_k - p_j\|}
  \right)
  \quad \text{for } (i,j,k) \in A.
\]

The above $\Eharmonic$ may be viewed as the second-order approximation of a potential energy that includes interactions between neighboring atoms and neighboring atom pairs. Furthermore, the harmonic potential \eqref{eq:energy_harmonic} can be seen as a second-order approximation of the Morse potential \cite{morse1929diatomic} near its minimum. Thus it is the simplest possible approximation to more complicated potentials, and the natural question arises whether this approximation exhibits any qualitatively different behavior. Let us recall that our approach applies to any interaction potential and the above choice represents just an example. More complicated potential energies exist and are used in the literature (such as Tersoff's potential to be introduced in \eqref{eq:energy_tersoff}).

For our first example, we analyze a (5,5)-armchair nanotube with 670 atoms. This chirality allows only one fullerene (see \Cref{fig:graphene}) cap configuration, with pentagons at the center, keeping the 5-fold symmetry. Each cap consists of 30 atoms, while each cross section contains 10 atoms, resulting in a total of 31 cross sections. We sometimes refer to cross sections as \textit{rings}. Once we select the geometry to analyze, the first step is to approximate the solution.

As initialization, for the study of capped nanotubes we simply place atoms in a regular pattern on a cylindrical tube, where the exact pattern depends on the chosen nanotube chirality (see \Cref{fig:nanotube_5_5_combined} for an example). \Cref{fig:nanotube_5_5_combined} shows a numerical approximation of a local minimizer of the energy function \eqref{eq:energy_harmonic} for a (5,5)-armchair nanotube with caps, which lies within a $5\cdot10^{-11}$ \AA{} distance from the true local minimizer.

\begin{theorem}[Capped armchair nanotube]
    \label{theorem:existence_armchair_harmonic}
    Consider the approximate local minimizer $\bar{p}\in \mathbb{R}^{3n}$ of
    $\Eharmonic$ from \eqref{eq:energy_harmonic}, with bonds corresponding to
    the $(5,5)$-armchair capped nanotube with $n=670$ shown in
    \Cref{fig:nanotube_5_5_combined}. There exists a local minimizer $p$ of
    $\Eharmonic$ such that
    \[
        \|p - \bar{p}\|_\infty \le 4.6844\cdot 10^{-11}\,\text{\AA}.
    \]
\end{theorem}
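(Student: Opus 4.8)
The plan is to apply \cref{theorem:minimizer} directly, with $f=F_{\hat p}$ (where $\hat p$ is the BFGS output used to build the constraint rows of \eqref{eq:validation_map}), $\bar x=(\bar\tau,\bar\rho,\bar p)$ the Newton-refined zero of $F_{\hat p}$ whose $p$-component is the $\bar p$ of \cref{fig:nanotube_5_5_combined}, and $\|\cdot\|=\|\cdot\|_\infty$. Thus the whole argument reduces to exhibiting numbers $r^*,Y,Z(r^*)>0$ with $Z(r^*)<1$ verifying the hypotheses of \cref{theorem:validation}, together with positivity of the diagonal entries $(M_i(p)^{-1})_{ii}$ on the ball $B_{r_{\min}}(\bar x)$; here the dimension is $m=3n+6=2016$. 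I would first compute $A$ as the floating-point inverse of $DF_{\hat p}(\bar x)$, and implement $F_{\hat p}$ and $DF_{\hat p}$ so that they accept interval inputs; the only nontrivial point is that $\Eharmonic$ involves the bond lengths $r_{ij}=\|p_i-p_j\|$ and the bond angles $\theta_{ijk}=\arccos(\cdot)$, so one needs rigorous interval enclosures of square roots, $\arccos$, and their first two derivatives — all standard and evaluated safely away from their singular loci because $\bar p$ is a genuine tube geometry. The matrix $\nabla^2\Eharmonic(\bar p)$ is sparse (only bonded pairs and adjacent bond pairs contribute), and $DF_{\hat p}$ is this sparse matrix bordered by the six rows and columns coming from the $T_i$ and $R_ip$.

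With these implementations in place, I would compute $Y$ as the right endpoint of the interval enclosure of $\|AF_{\hat p}(\bar x)\|_\infty$ (small because $\bar p$ is Newton-refined to ten--twelve digits); then, for a trial radius $r^*$, compute $Z(r^*)$ as the right endpoint of the interval enclosure of $\|I-A\,DF_{\hat p}(X)\|_\infty$ with $X$ the box of entrywise radius $r^*$ around $\bar x$. A bisection over $r^*$ targeting $Z(r^*)\approx\tfrac12$ fixes $r^*$, and evaluating $Y/(1-Z(r^*))$ in interval arithmetic produces $r_{\min}$; the stated values $r_{\min}=4.6844\cdot10^{-11}$ and $r^*=9.9993\cdot10^{-8}$ should emerge here. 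At this point \cref{theorem:validation,theorem:equivalence} already guarantee a unique zero of $F_{\hat p}$ in $\bar B_r(\bar x)$ for every $r\in[r_{\min},r^*]$, hence a critical point of $\Eharmonic$ within distance $r_{\min}$ of $\bar p$ and unique (up to the rotation/translation invariances) within distance $r^*$.

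It remains to certify the minimizing property on $B_{r_{\min}}(\bar p)$. Following \cref{theorem:minimizer}, I would form the bordered Hessian $H(p)=\nabla^2\Eharmonic(p)+\sum_i T_iT_i^\top+\sum_i(R_ip)(R_ip)^\top$ with interval entries for $p$ ranging over the box of radius $r_{\min}$ around $\bar p$, and verify $(M_i(p)^{-1})_{ii}>0$ for each $i=1,\dots,3n=2010$ by a rigorous interval evaluation of that diagonal entry. Positivity of all of them yields, through the cofactor/Sylvester induction in the proof of \cref{theorem:minimizer}, positive definiteness of $H(p)$ and hence positive semidefiniteness of $\nabla^2\Eharmonic(p)$ on all of $B_{r_{\min}}(\bar p)$, so the validated critical point is indeed a local minimizer.

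I expect the main obstacle to be controlling interval overestimation at this dimension: $I-A\,DF_{\hat p}(X)$ is a $2016\times2016$ interval matrix, and a naive evaluation of its induced $\infty$-norm can easily overshoot $1$ through the wrapping effect, which would make the bound useless. Keeping $Z(r^*)$ safely below $1$ requires $A$ to be a high-quality numerical inverse (so that the midpoint of $I-A\,DF_{\hat p}(\bar x)$ is tiny), exploiting the sparsity of $\nabla^2\Eharmonic$ so that each row of the interval product is a short sum, and choosing $r^*$ small enough that the nonlinear ($p$-dependent) part of the Hessian perturbs the rows only in a controlled way. The $2010$ interval linear-algebra evaluations behind the Sylvester check are the other computationally heavy ingredient, but each is a routine rigorous computation once $H(p)$ has been enclosed.
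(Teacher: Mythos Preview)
Your proposal is correct and follows essentially the same approach as the paper: apply \cref{theorem:minimizer} with $f=F_{\hat p}$, compute $A$ as a numerical inverse of $DF_{\hat p}(\bar x)$, obtain $Y$ and $Z(r^*)$ via interval arithmetic (bisecting on $r^*$ to hit $Z(r^*)\approx\tfrac12$), and then check the Sylvester-type diagonal positivity on the $r_{\min}$-box. The paper's proof is exactly this, only terser, and additionally records the intermediate constants $Y=2.1601\cdot10^{-11}$\,\AA{} and $Z(r^*)=5.3880\cdot10^{-1}$ that produce the stated $r_{\min}$ and $r^*$.
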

\begin{proof}
This is an application of \Cref{theorem:minimizer} with numerically found $\bar p$ as well as $\bar\tau,\bar\rho,A$
and all conditions rigorously verified numerically via interval arithmetic
for the values $r^*=9.9993\cdot10^{-8}$ \AA{}, $Z(r^*)=5.3880\cdot10^{-1}$ \AA{}, and $Y=2.1601\cdot10^{-11}$ \AA{}, resulting in $r_{\min}=Y/(1-Z(r^*))\leq4.6844\cdot10^{-11}$ \AA{}. We obtain a true local minimizer of $\Eharmonic$ that lies within distance $4.6844\cdot10^{-11}$ \AA{} of $\bar p$ and is unique within distance $9.9993\cdot10^{-8}$ \AA{} of $\bar p$ (in the norm $\|\cdot\|_\infty$). 

The detailed computational procedure was already described after \Cref{theorem:minimizer}.
\Cref{fig:ZDependence} illustrates the numerical dependence of $Z(r^*)$ and $r_{\min}$ on the choice of $r^*$ (recall that we choose $r^*$ so that $Z(r^*)\approx1/2$).
It shows that, employing different choices of $r^*$,
one can even improve the accuracy estimate $r_{\min}$ for the approximate minimizer by a factor $2$
and increase the uniqueness radius $r^*$ by a factor of $6$.
\end{proof}

\begin{figure}
\centering
\begin{tikzpicture}
\begin{loglogaxis}[
height=.3\textwidth,
width=.48\textwidth,
xlabel={$r^*$ },
ylabel={$Z(r^*)$},
]
\addplot[black, dotted, mark=square*] coordinates {(1e-10,0.00053889) (1e-09,0.0053889) (1e-08,0.053889) (1e-07,0.53889) (5.9605e-07,0.97491) (6.1138e-07,0.99999)};
\end{loglogaxis}
\end{tikzpicture}
\begin{tikzpicture}
\begin{loglogaxis}[
height=.3\textwidth,
width=.48\textwidth,
xlabel={$r^*$ },
ylabel={$r_{\min}$ },
]
\addplot[black, dotted, mark=square*] coordinates {(1e-10,2.1613e-11) (1e-09,2.1719e-11) (1e-08,2.2832e-11) (1e-07,4.6847e-11) (5.9605e-07,1.6238e-10) (6.1138e-07,5.6036e-07)};
\end{loglogaxis}
\end{tikzpicture}
\caption{Dependence of $Z(r^*)$ and $r_{\min}$ on $r^*$ for the example of \Cref{theorem:existence_armchair_harmonic}.}
\label{fig:ZDependence}
\end{figure}
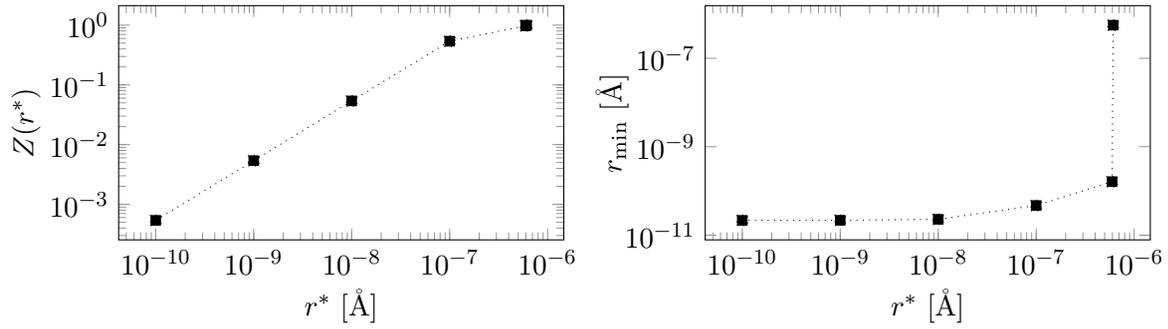

The existence radius $4.6844\cdot 10^{-11}$ in the statement of
Theorem~\ref{theorem:existence_armchair_harmonic} should not be read as a
physical length scale. It is a rigorous bound obtained from our validation
theorem and shows that the approximate configuration differs from a true local
minimizer by at most $4.6844\cdot 10^{-11}$ in the $\|\cdot\|_\infty$-norm.
The fact that this radius is so small means that the numerical approximation is
very close to a true local minimizer.

More concretely, a bound of the form
\(
  \|p - \bar{p}\|_\infty \le r
\)
means that each atomic position of $p$ lies in a small cube of side
length $2r$ centered at its numerical value, and this interval
representation of the local minimizer can be implemented rigorously using
interval arithmetic.

In the same spirit, the uniqueness radius appearing in the proof of
Theorem~\ref{theorem:existence_armchair_harmonic} is also a mathematical
quantity rather than a physical scale. It shows that no other local minimizer
lies closer to the numerical approximation than this radius. This does not rule
out the possibility of other, qualitatively similar configurations further away
in the energy landscape.

These bounds on the atomic positions allow us to control any geometric quantity
computed from the configuration, such as bond lengths, bond angles, or tube
radius. Each of these quantities comes with a mathematically proven error bar
derived from the position bounds.

The relevance of these rigorous statements depends on the questions we want to
ask. In many physical applications one is interested in differences that appear
in the first three or four significant digits. Smaller discrepancies may 
not be relevant. From the point of view of a mathematical study, however, there
is no restriction on how small a rigorously detected difference can be. In
fact, working with nine or ten significant digits (or even more) is standard
practice.

What can we learn from \Cref{theorem:existence_armchair_harmonic}?
We begin by examining the impact of the caps on the nanotube's radius. To this end we first align the computed nanotube's axis with the \( z \)-axis and compute the distance from each atom to the \( z \)-axis. Together with \Cref{theorem:existence_armchair_harmonic} this produces an interval for each atom, which we represent as error bars in our plots. The intervals are small enough that any visible distinctions in the plot are
mathematically rigorous. In other words, what we see is what we can prove.

We present our results in \Cref{fig:nanotube_5_5_relative radii}. The horizontal axis indicates the index of each cross section, with zero corresponding to the middle of the tube. The vertical axis, shown on a logarithmic scale, displays the difference between each atom's distance to the nanotube axis and the average distance in the middle cross section, excluding atoms in the caps. Each visible dot corresponds to ten overlapping error bars (one for each atom of the cross section), showing the interval within which the deviation lies for the true energy minimizer.

\begin{figure}[!ht]
    \hspace*{2.5em}\begin{minipage}{\linewidth}
 \includegraphics[width=0.9\textwidth,trim=150 250 200 190,clip]{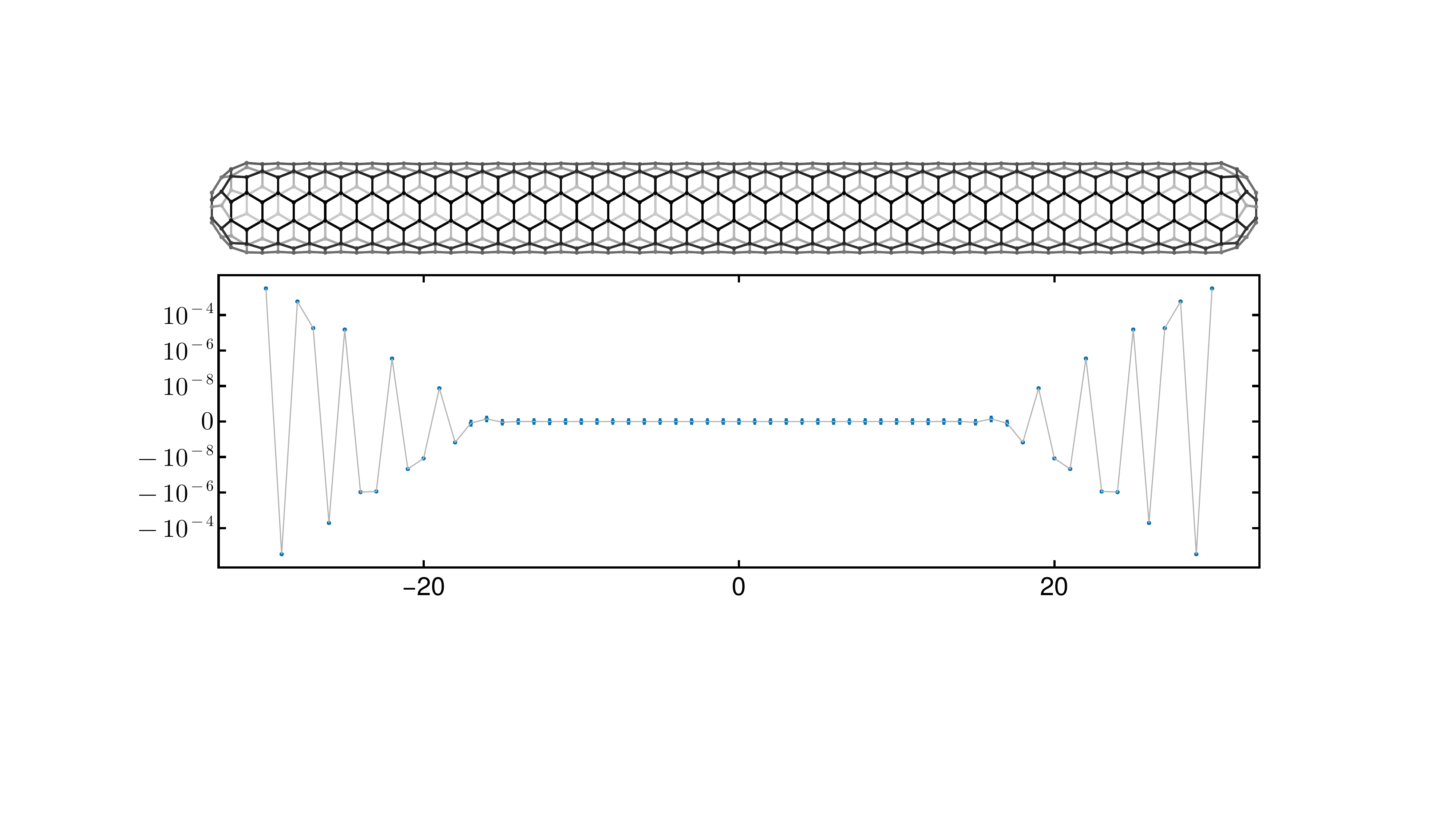}

\setlength\unitlength{\textwidth}
\begin{picture}(1,0)(0,0)

  \put(0.5, 0.055){
    \makebox(0,0)[t]{\colorbox{white}{\strut Index of cross section}}
  }

  \put(-0.060, 0.09){
    \rotatebox{90}{
      \colorbox{white}{
        \begin{minipage}{0.19\textwidth}
          \centering
          Deviation from\\middle radius [\AA{}]
        \end{minipage}
      }
    }
  }

\end{picture}
\end{minipage}

  \caption{A \((5,5)\)-armchair nanotube configuration with 670 atoms for the harmonic potential $\Eharmonic$ proven to be energy minimizing with computer assistance. The plot underneath is aligned with the nanotube and shows the deviation of the distance between each atom and the nanotube axis from the mean distance in cross section 0. Each dot represents ten overlapping error bars (too small to be resolved in the figure), one for each atom of the cross section.
}
     \label{fig:nanotube_5_5_relative radii}
\end{figure}

For example, the slight misfit between the native cap and nanotube diameter provably leads to oscillations in the nanotube diameter, shown in \Cref{fig:nanotube_5_5_relative radii}.
The amplitude of these oscillations decreases exponentially away from the caps until at cross section $\pm17$ it reaches a size that can no longer be resolved by the numerical accuracy of our proof
(but would be resolved when working with higher numerical precision).
\Cref{fig:middle_section_5_5_harmonic} zooms in on cross sections $-17$ to $17$ so that the error bars become visible.
As can be seen, the diameters of these cross sections (distance between opposite atoms) coincide up to $10$ digits of accuracy, taking the value of $6.949375982$ \AA.
As expected from the exponential decay of the cap influence, also a shorter nanotube shows the exact same behavior, see \Cref{fig:nanotube_5_5_short_relative radii}.
This remains true for a nanotube with an even number of cross sections as shown in \Cref{fig:nanotube_5_5_N360} (only tube configuration shown), where the reflection symmetry across the nanotube's midplane is lost
(the central pentagon of one cap is rotated relative to the other).

\begin{figure}[!ht]
    \centering
    \includegraphics[width=\textwidth,trim=0 0 0 0,clip]{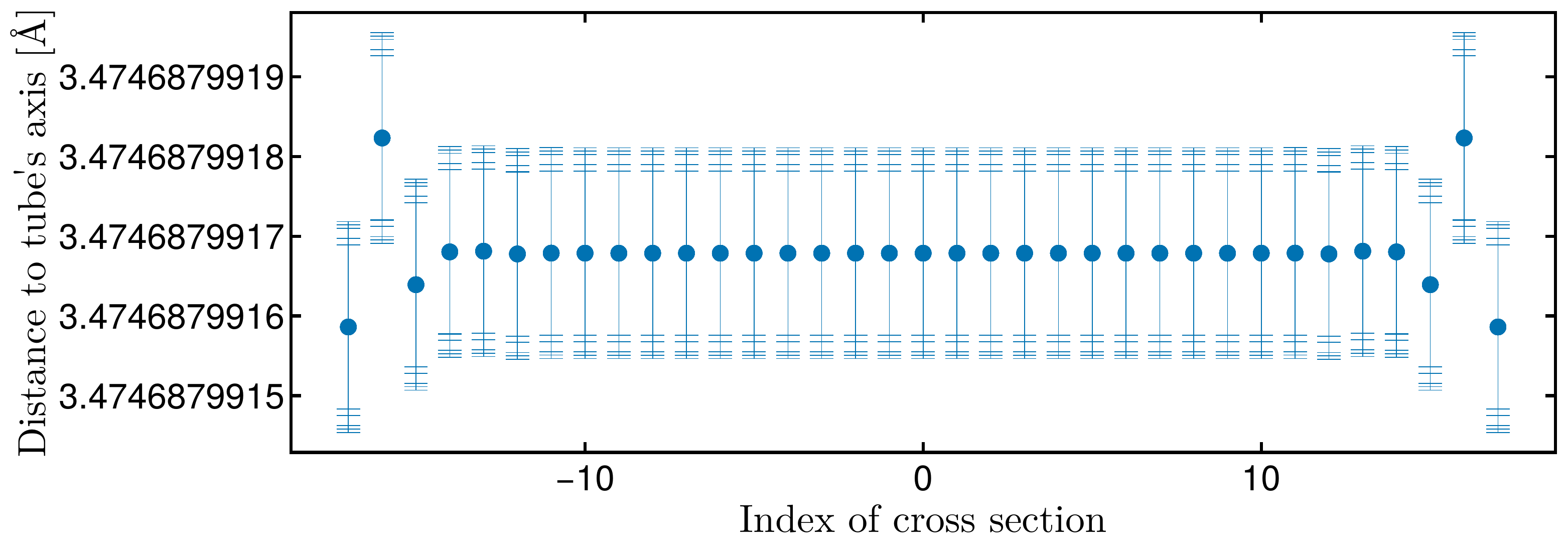}

    \caption{A zoom-in of the middle section of the plot in \Cref{fig:nanotube_5_5_relative radii}. The ten overlapping error intervals are now visible. The dots represent their midpoints. In this section of the tube, the computer-assisted proof, with the current accuracy, does not allow us to detect oscillations in the radius.}
    \label{fig:middle_section_5_5_harmonic}
\end{figure}

\begin{figure}[!ht]
    \centering
    \includegraphics[width=\textwidth,trim=0 255 0 100,clip]{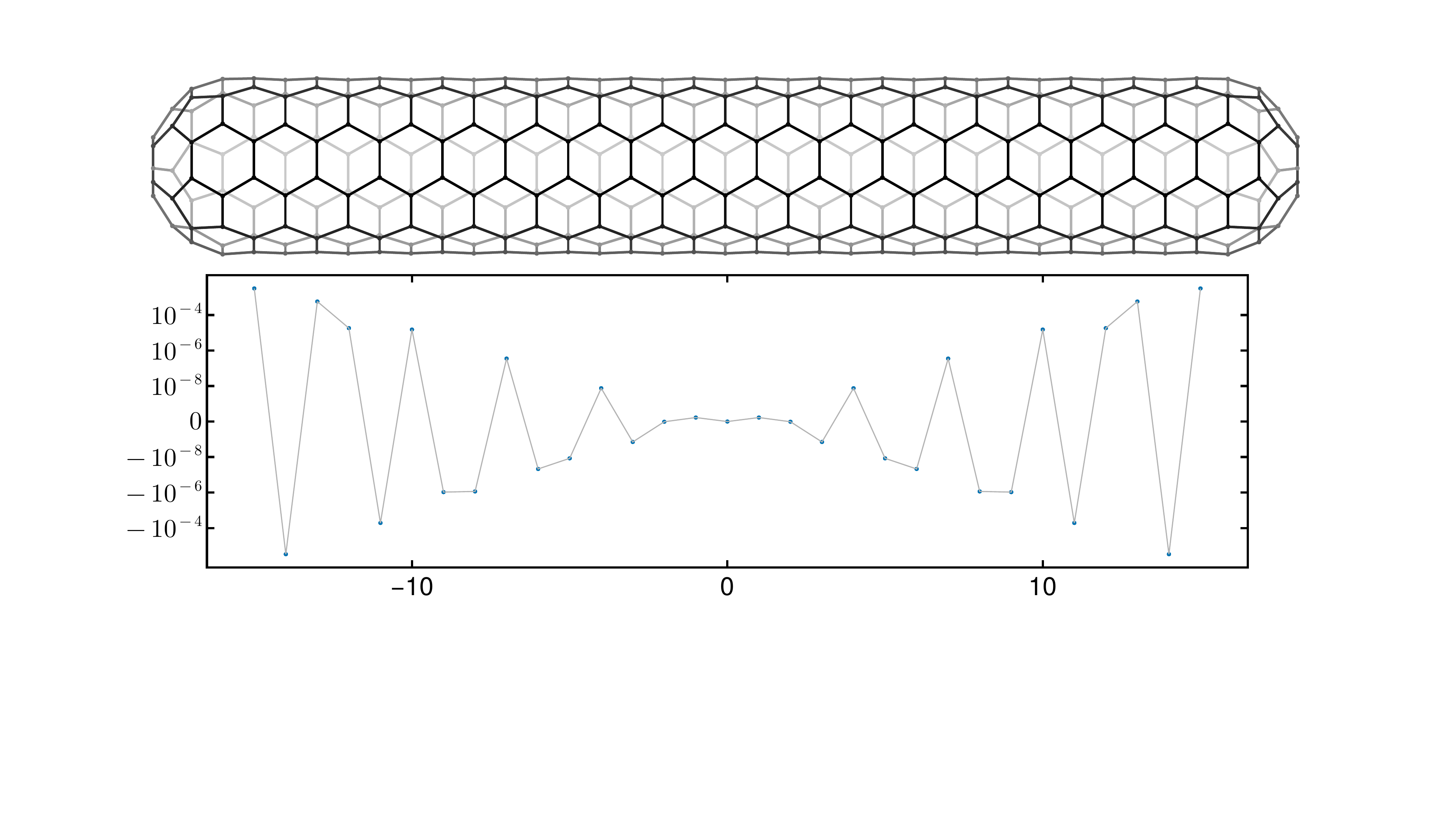}
\setlength\unitlength{\textwidth}
\begin{picture}(1,0)(0,0)

  \put(0.48, 0.04){
    \makebox(0,0)[t]{\colorbox{white}{\footnotesize  \strut Index of cross section}}
  }

  \put(0.04, 0.07){
    \rotatebox{90}{
      \colorbox{white}{
        \begin{minipage}{0.19\textwidth}
          \centering \footnotesize
          Deviation from\\middle radius [\AA{}]
        \end{minipage}
      }
    }
  }

\end{picture}

    \caption{Same as \Cref{fig:nanotube_5_5_relative radii} for an energy-minimizing \((5,5)\)-armchair nanotube with 370 atoms.
    }
    \label{fig:nanotube_5_5_short_relative radii}
\end{figure}

\begin{figure}[ht]
    \centering

    \begin{subfigure}[b]{0.65\textwidth}
        \centering
        \includegraphics[width=\textwidth,trim=40 200 30 205,clip]{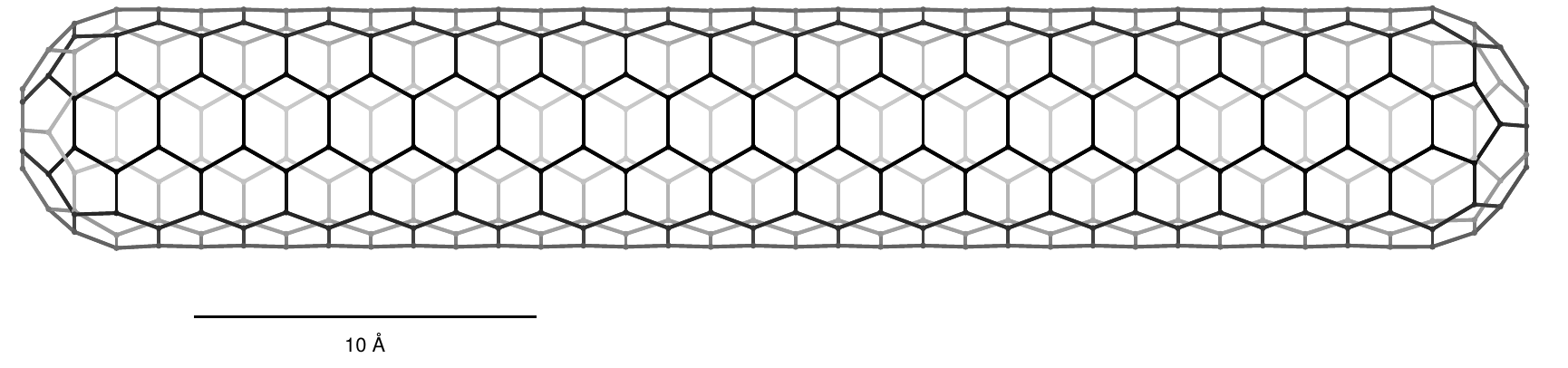}

          \setlength{\unitlength}{1mm}
          \begin{picture}(0,0)
            \put(-38.35,7.5){\rule{22.5mm}{2pt}}
            \put(-30.5,3.4){\colorbox{white}{\small10\,\AA}}
          \end{picture}     
    \end{subfigure}
    \hfill
    \begin{subfigure}[b]{0.3\textwidth}
        \centering
        \includegraphics[width=\textwidth]{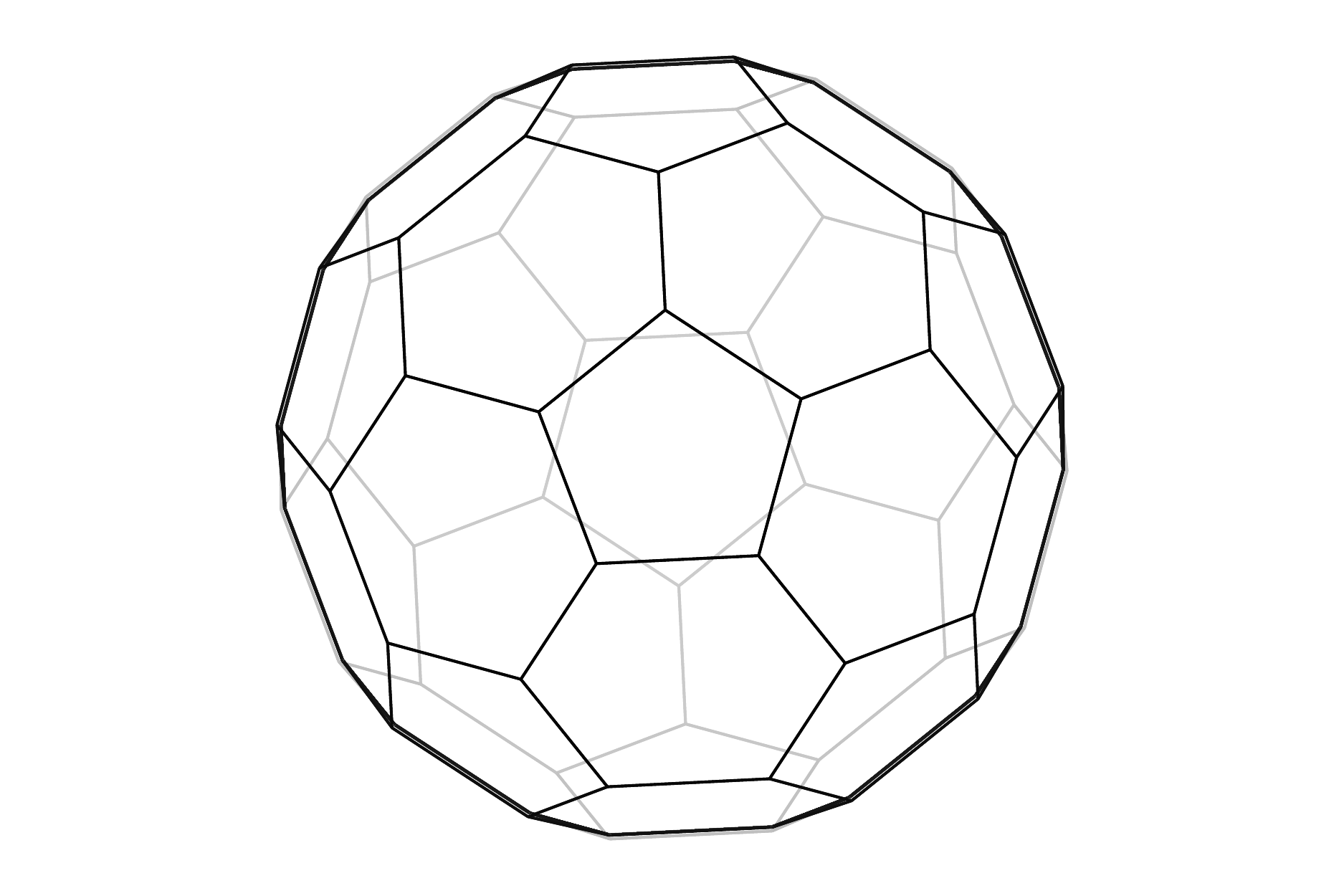}
    \end{subfigure}

\caption{A numerically validated approximation of a local energy minimizer for a capped \((5,5)\)-armchair nanotube with an even number of cross sections, using the harmonic potential $\Eharmonic$. In this case, the reflection symmetry is lost (compare \Cref{fig:nanotube_5_5_combined}).}

    \label{fig:nanotube_5_5_N360}
\end{figure}

\begin{remark}[The radius of the nanotube]
Carbon nanotube structures are not perfect cylinders, but polyhedral. When we refer to the \textit{radius of the tube}, we mean the distance from an atom to the axis of the tube.
The radius of an ideal $(n,m)$ nanotube cylinder of bond length $r_0$ is given by \cite{BUDYKA2005266}
\[
    \frac{r_0}{2\pi}  \sqrt{3(n^2 + m^2 +nm)}.
\]
Since bond lengths in carbon nanotubes slightly deviate from those in planar graphene, we adopt a bond length of $r_0=1.44$ \AA{} \cite{SaitoBook}. For the (5,5) armchair nanotube this results in a radius of $3.4377$ \AA{}. 

This geometric formula provides only an idealized reference point. In practice, the usual formula as a function of $(n,m)$ is reliable only when the tube radius is sufficiently large. For small radii, curvature and electronic effects become important, and some interatomic potentials can miss these contributions. 
One possible use of computer-assisted validation is to quantify when such continuum or rolled-graphene idealizations become accurate enough for a given interatomic potential.
\end{remark}

Let us now focus our analysis on a different geometry: the (10,0)-zigzag nanotube with caps consisting of 660 atoms. This chirality features cross sections composed of ten atoms, similar to the (5,5)-armchair nanotube previously presented. We first rigorously validate the numerical approximation shown in \Cref{fig:nanotube_10_0_N660}.
The proof is analogous to that of \Cref{theorem:existence_armchair_harmonic}.

\begin{theorem}[Capped zigzag nanotube]
    \label{theorem:existence_zigzag_harmonic}
    Consider the approximate local minimizer $\bar{p}\in \mathbb{R}^{3n}$ of $\Eharmonic$ from \eqref{eq:energy_harmonic} (with bonds corresponding to the (10,0)-zigzag capped nanotube with $n=660$) shown in \Cref{fig:nanotube_10_0_N660}. There exists a local minimizer $p$ of
    $\Eharmonic$ such that
    \[
        \|p - \bar{p}\|_\infty \le 1.2226\cdot10^{-11}\,\text{\AA}.
    \]
\end{theorem}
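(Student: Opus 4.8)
The plan is to proceed exactly as in the proof of \cref{theorem:existence_armchair_harmonic}, since the statement is structurally identical and only the connectivity data (now the $(10,0)$-zigzag capped tube with $n=660$) and the resulting numerical constants differ. First I would take the numerically computed approximate minimizer $\bar p$ of $\Eharmonic$ depicted in \cref{fig:nanotube_10_0_N660}, obtained by BFGS descent from the regular cylindrical initialization for the $(10,0)$-zigzag chirality and refined by Newton's method applied to $F_{\hat p}$ as described after \cref{theorem:minimizer}, yielding $\bar x=(\bar\tau,\bar\rho,\bar p)$ with ten to twelve digits of accuracy. Then I would form $A$ as the numerical inverse of $DF_{\hat p}(\bar x)$.

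Next I would invoke \cref{theorem:minimizer} with $f=F_{\hat p}$ and $\|\cdot\|=\|\cdot\|_\infty$. Concretely: (i) compute the bound $Y$ as the right endpoint of the interval enclosure of $\|AF_{\hat p}(\bar x)\|_\infty$; (ii) run a bisection in $r^*$ to locate the largest radius with $Z(r^*)\lesssim\tfrac12$, where $Z(r^*)$ is the right endpoint of the interval enclosure of $\|I-ADF_{\hat p}(X)\|_\infty$ with $X$ the box of radius $r^*$ about $\bar x$; (iii) set $r_{\min}=Y/(1-Z(r^*))$ evaluated in interval arithmetic. For the $(10,0)$ tube this should produce the reported values, i.e.\ a uniqueness radius $r^*=9.9902\cdot10^{-8}$ \AA{} and $r_{\min}\le1.2226\cdot10^{-11}$ \AA{} (corresponding to some $Z(r^*)$ near $1/2$ and $Y\sim10^{-11}$ \AA). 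Finally I would verify the positivity hypothesis of \cref{theorem:minimizer}, namely $(M_i(p)^{-1})_{ii}>0$ for all $i=1,\dots,3n$ and all $p$ in the box of radius $r_{\min}$ about $\bar p$, by evaluating each $(M_i(P)^{-1})_{ii}$ in interval arithmetic on the interval box $P_j=[\bar p_j-r_{\min},\bar p_j+r_{\min}]$ and checking that the enclosure is strictly positive. Corollary \ref{theorem:minimizer} then gives a unique (up to rotations and translations) local minimizer of $\Eharmonic$ in $B_r(\bar p)$ for every $r\in[r_{\min},r^*]$, which is the claim.

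I do not expect a genuine mathematical obstacle here — the hard analytic work is already packaged in \cref{theorem:validation,theorem:equivalence,theorem:minimizer}. The only practical difficulty is computational: the Jacobian $DF_{\hat p}$ is a $(3n+6)\times(3n+6)$ matrix with $3n+6=1986$, and the conditions must be checked in rigorous interval arithmetic, so one must ensure the interval enclosures of $AF_{\hat p}(\bar x)$, of $I-ADF_{\hat p}(X)$, and of the $3n$ entries $(M_i(P)^{-1})_{ii}$ are sharp enough (in particular that $Z(r^*)<1$ and that none of the minor ratios straddles zero). Since the harmonic potential $\Eharmonic$ is smooth with simple algebraic dependence on bond lengths and angles, the interval implementations of $F_{\hat p}$, $DF_{\hat p}$ and the Hessian $\nabla^2\Eharmonic$ are well-conditioned, and the $(10,0)$-zigzag tube is of the same size and regularity as the already-validated $(5,5)$ tube, so the same bisection and verification pipeline applies verbatim. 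The detailed computational procedure is exactly the one described after \cref{theorem:minimizer}, and the code and data are in \cite{github_codes}.
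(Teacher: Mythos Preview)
Your proposal is correct and follows exactly the paper's own approach: the paper simply states that the proof is analogous to that of \cref{theorem:existence_armchair_harmonic}, i.e.\ an application of \cref{theorem:minimizer} with the numerically found $\bar p,\bar\tau,\bar\rho,A$ and the hypotheses verified in interval arithmetic, only with the $(10,0)$-zigzag connectivity and the corresponding constants. Your description of the computational pipeline (BFGS initialization, Newton refinement, bounds $Y$ and $Z(r^*)$, bisection for $r^*$, and the positivity check on $(M_i(P)^{-1})_{ii}$) matches the procedure detailed after \cref{theorem:minimizer}.
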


\begin{proof}
The proof is analogous to that of Theorem~\ref{theorem:existence_armchair_harmonic}
and follows the same validation procedure. In this case, we obtain a true local minimizer of $\Eharmonic$ that lies within distance $1.2226\cdot10^{-11}$ \AA{} of $\bar p$ and is unique within distance $9.9902\cdot10^{-8}$ \AA{} of $\bar p$ in the norm $\|\cdot\|_\infty$.
\end{proof}

\begin{figure}[ht]
    \centering

    \begin{subfigure}[b]{0.65\textwidth}
        \centering
        \includegraphics[width=\textwidth,trim=40 200 50 230,clip]{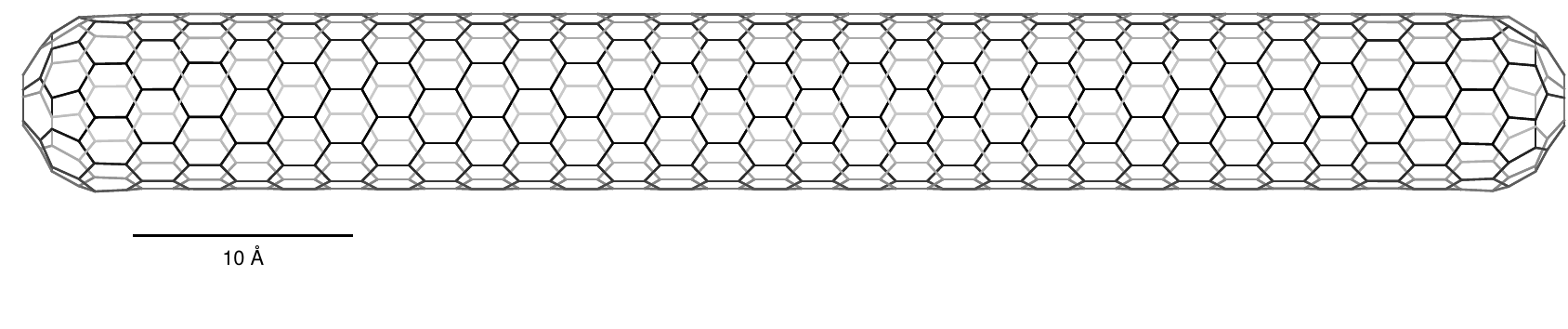}
    \end{subfigure}
    \hfill
    \begin{subfigure}[b]{0.3\textwidth}
        \centering
        \includegraphics[width=\textwidth,trim=0 40 0 40,clip]{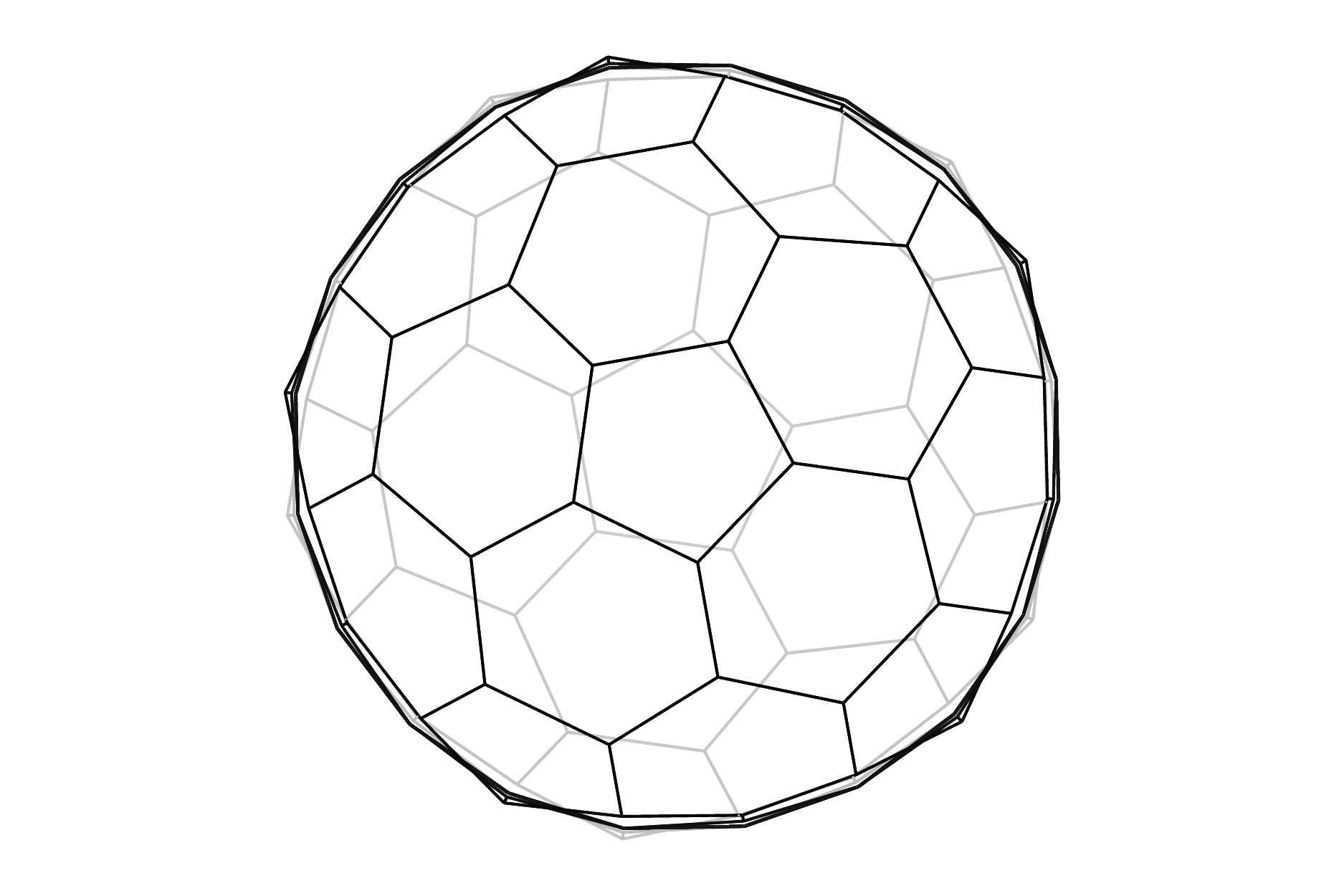}

          \setlength{\unitlength}{1mm}
          \begin{picture}(0,0)
            \put(-124.3,5.5){\rule{14.35mm}{2pt}}
            \put(-121,1.3){\colorbox{white}{\small10\,\AA}}
          \end{picture}
        
    \end{subfigure}

        \caption{A numerically validated approximation of a local energy minimizer for a \((10,0)\)-zigzag nanotube with pentagonally symmetric caps, using the harmonic potential $\Eharmonic$.
        }
    \label{fig:nanotube_10_0_N660}
\end{figure}

Just like for the $(5,5)$-armchair nanotube, the diameters of the cross sections oscillate with amplitude exponentially decreasing away from the caps, but the type of oscillations is qualitatively different as illustrated in \Cref{fig:nanotube_10_0_radii}:
Every third and fourth cross section has perfect diameter (the same as the middle cross sections),
while for every other cross section half the atoms have a larger and the other half a smaller distance from the nanotube axis
(where the deviation to larger and smaller distances is symmetric).

\begin{figure}[ht]
\centering
\setlength\unitlength\textwidth
\begin{picture}(1,.25)
\put(.025,0){\includegraphics[height=.25\unitlength,trim=45 0 0 0,clip]{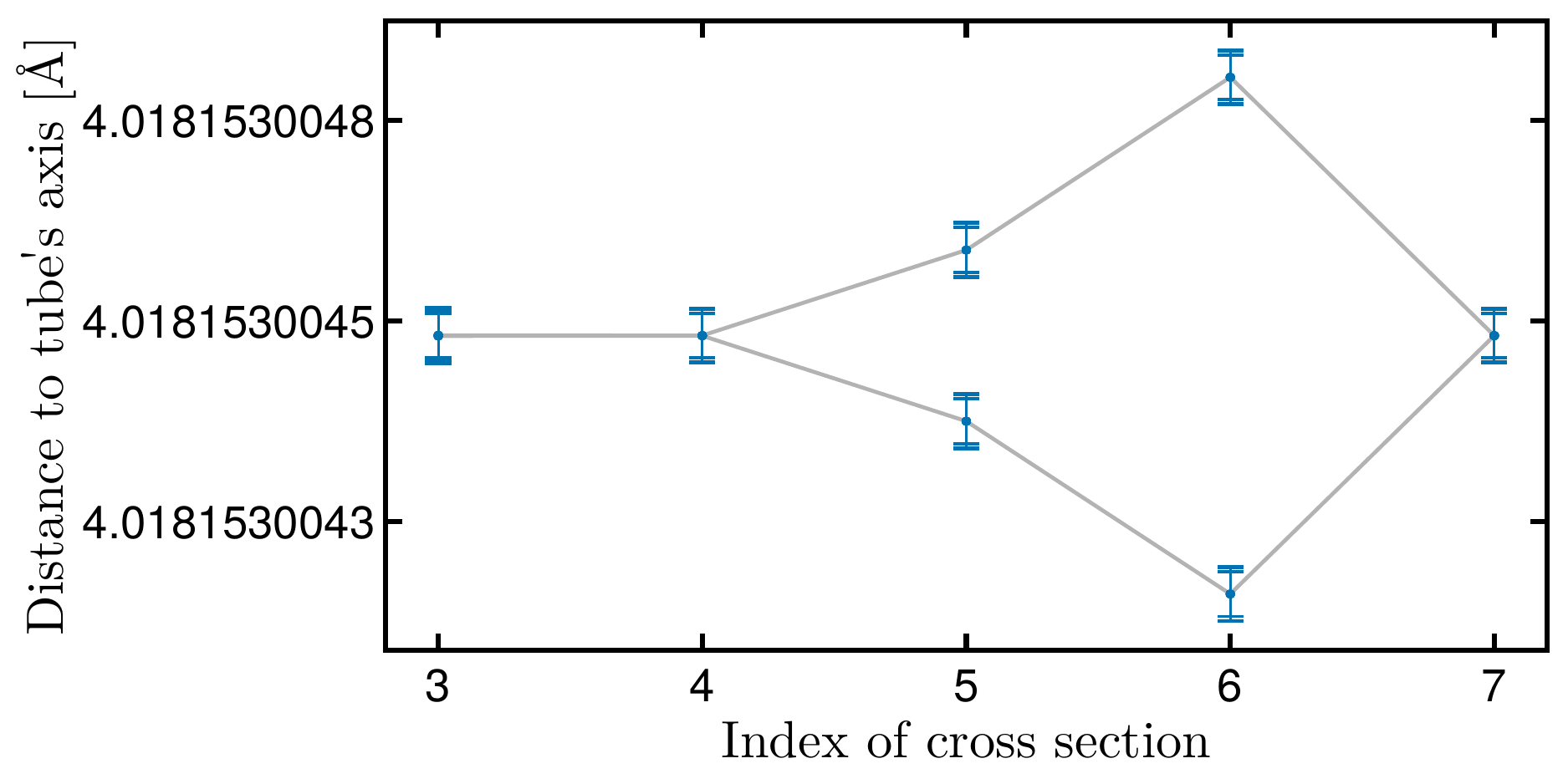}}
\put(.53,0){\includegraphics[height=.25\unitlength,trim=45 0 0 0,clip]{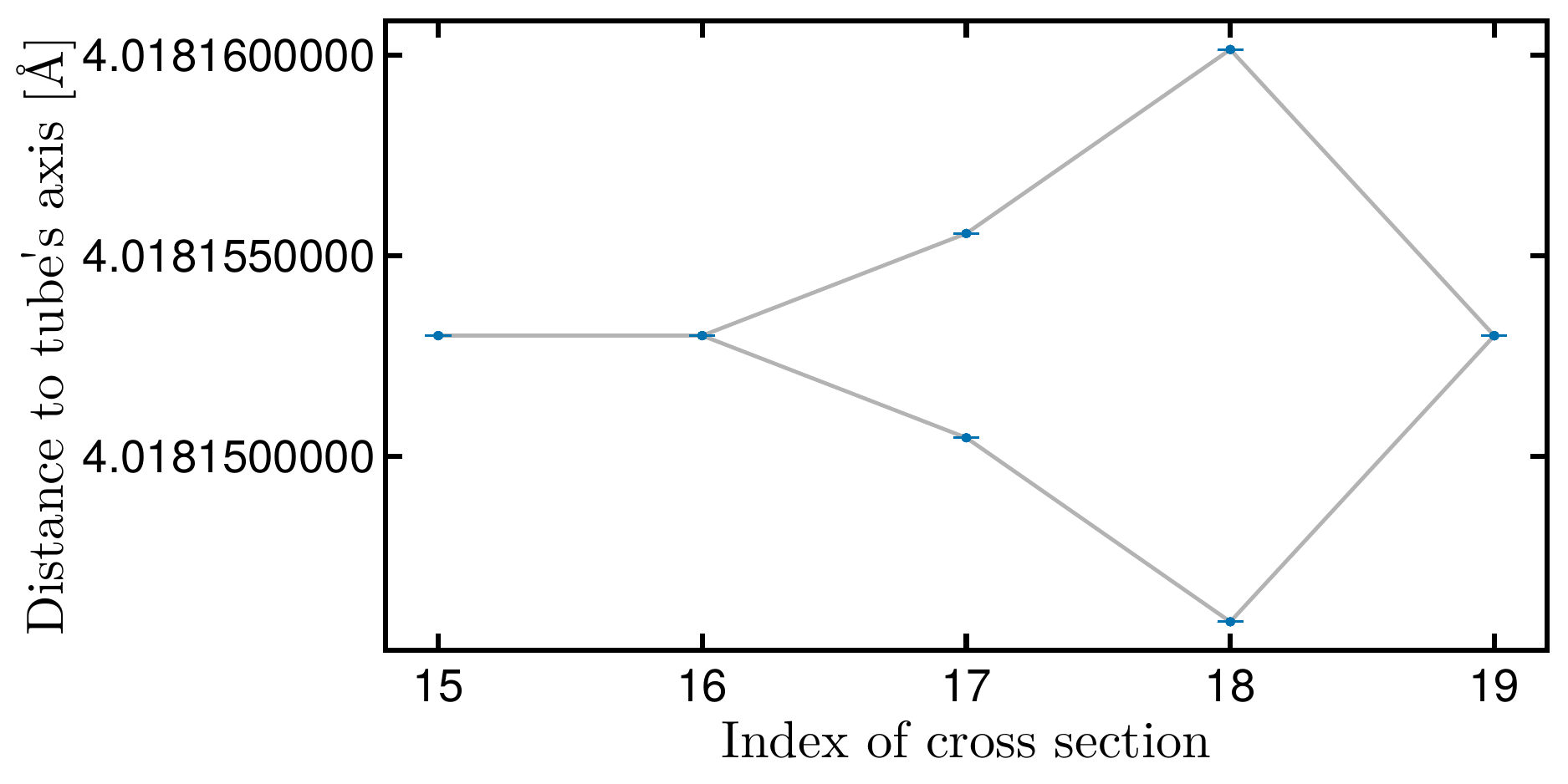}}
\put(.06,0){\colorbox{white}{\small Index of cross section (counted from center)}}
\put(.57,0){\colorbox{white}{\small Index of cross section (counted from center)}}
\put(-0.01,.01){\rotatebox{90}{\colorbox{white}{\footnotesize   Distance to tube's axis [\AA{}]}}}
\put(.496,.01){\rotatebox{90}{\colorbox{white}{\footnotesize Distance to tube's axis [\AA{}]}}}
\end{picture}

  \caption{Distance of each atom to the nanotube axis for the \((10,0)\)-zigzag validated nanotube from \Cref{theorem:existence_zigzag_harmonic}.}

    \label{fig:nanotube_10_0_radii}
\end{figure}

\begin{figure}[ht]
    \centering

    \begin{subfigure}[b]{0.65\textwidth}
        \centering
        \includegraphics[width=\textwidth,trim=40 220 50 205,clip]{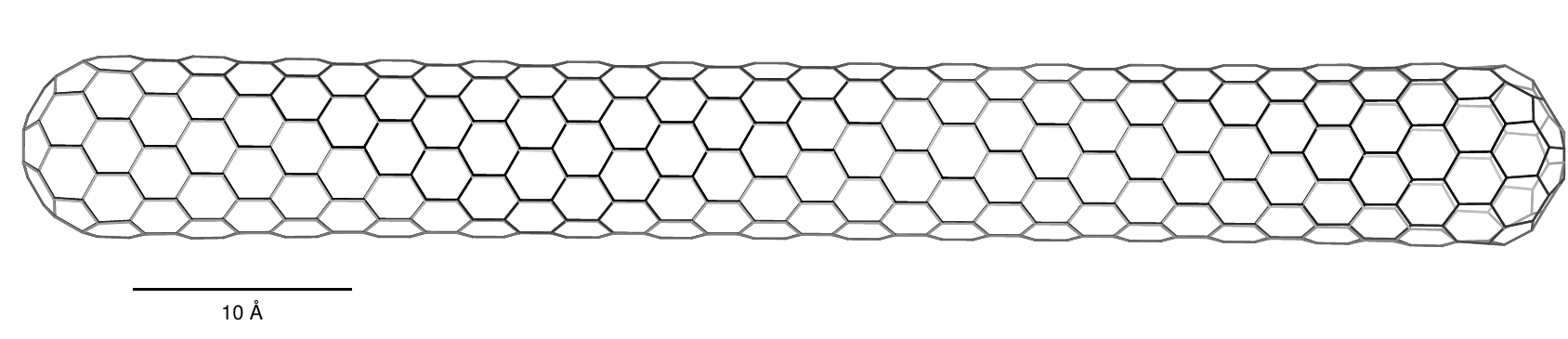}
    \end{subfigure}
    \hfill
    \begin{subfigure}[b]{0.3\textwidth}
        \centering
        \includegraphics[width=\textwidth]{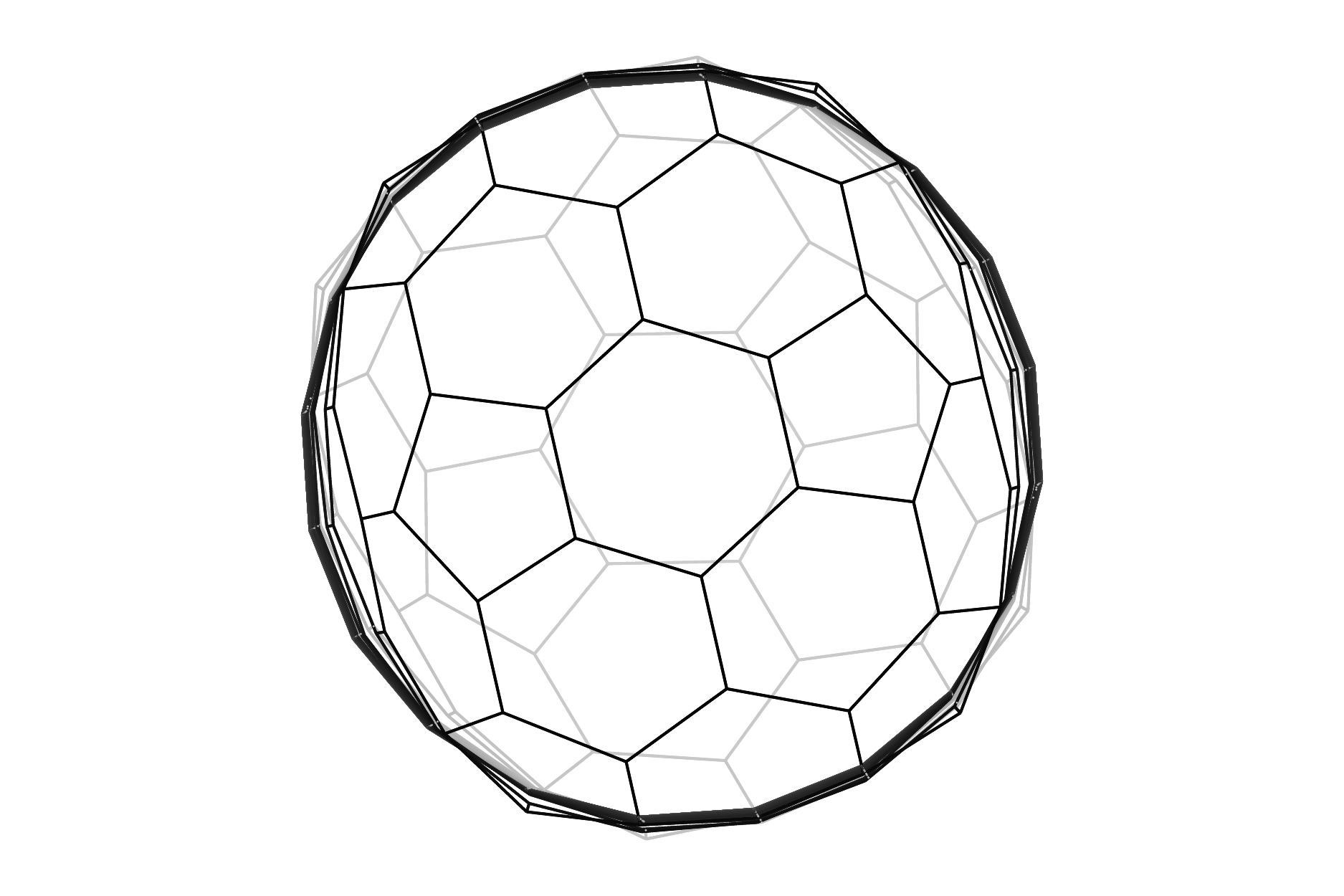}
    \end{subfigure}

        \setlength{\unitlength}{1mm}
          \begin{picture}(0,0)
            \put(-69.5,7.5){\rule{14.35mm}{2pt}}
            \put(-66,3.2){\colorbox{white}{\small10\,\AA}}
          \end{picture}

        \caption{A numerically validated approximation of a local energy minimizer for a \((10,0)\)-zigzag nanotube with $D_2$-symmetric caps, using the harmonic potential $\Eharmonic$.
        }
    \label{fig:nanotube_10_0_combined_hexagonal}
\end{figure}

The $(10,0)$-zigzag nanotube can be fitted with an alternative cap that has a hexagon at its center and dihedral $D_2$-symmetry \cite{LAIR2006447}. \Cref{fig:nanotube_10_0_combined_hexagonal} shows a numerical approximation of the $(10,0)$-zigzag nanotube with these caps. This numerical approximation is validated analogously to \Cref{theorem:existence_armchair_harmonic,theorem:existence_zigzag_harmonic}. To illustrate another application of our method, we provide rigorous measurements of the energy. As with all other numerical quantities, the energy evaluation produces a thin interval containing the true value, so we report the validated digits (those that stay the same over the full interval). \Cref{table:chirality_energies} lists the rigorous energy per atom for the five tube geometries presented.
As expected, one observes that the energy per atom decreases with increasing atom number since the influence of the caps decreases.
We also see that the $(10,0)$-zigzag nanotube is energetically more favorable than the $(5,5)$-armchair nanotube.

\begin{table}[ht]
    \centering
    \begin{tabular}{|c |c |c| c|}
        \hline
        Chirality & Cap symmetry       & Atoms & Energy per atom       \\
        \hline
        (5,5)     & Pentagonal & 670   &  0.31100507 \\
                \hline
        (5,5)     & Pentagonal & 370   &  0.50733624 \\
        \hline
        (5,5)     & Pentagonal & 360   &  0.51951604 \\
        \hline
        (10,0)    & Pentagonal & 660   &  0.28820774  \\
        \hline
        (10,0)    & $D_2$ & 364   &  0.29632260  \\
        \hline
    \end{tabular}
    \caption{Energy per atom for different chiralities using the harmonic potential.}
    \label{table:chirality_energies}
\end{table}

Besides other tube chiralities, we can analyze additional carbon structures, such as fullertubes and cyclocarbons~\cite{Koenig_2020,Stevenson_2024,Kaiser_2019}. We can also incorporate periodic offset conditions to model an infinite tube. In this setting, the nanotube is represented as a structure of atoms with boundary conditions, where atoms on the bottom boundary interact with those on the top. The difference is that these periodic boundary conditions remove one rotational symmetry so that the validation map \eqref{eq:validation_map} needs to be modified.

More general forms of energy have been used to model interaction energies, for example the bond order potentials \cite{PhysRevB.37.6991, Brenner1990, Pettifor1999} such as Tersoff's potential
\begin{equation}
    \label{eq:energy_tersoff}
    \ETersoff(p) \bydef \frac{1}{2} \sum_{i\neq j} f_C(r_{ij}) \left[ A e^{-\lambda_1 r_{ij}} - b_{ij} B e^{-\lambda_2 r_{ij}} \right],
\end{equation}
where the indices \(i\) and \(j\) run over the atoms of the system and \(r_{ij}\) denotes the distance between atoms \(i\) and \(j\). The cutoff function is defined as
\[
    f_C(r)=
    \begin{cases}
        1,                                                                      & r < R-D,       \\
        \frac{1}{2}\left\{ 1 - \sin\left[ \frac{\pi(r-R)}{2D} \right] \right\}, & R-D < r < R+D, \\
        0,                                                                      & r > R+D,
    \end{cases}
\]
with \(R\) chosen so that only the first-neighbor shell of atoms is included for most structures of interest. The bond order function is given by
\[
    b_{ij} = \left( 1 + \beta^n \zeta_{ij}^n \right)^{-\frac{1}{2n}}
\qquad
\text{with}
\qquad
    \zeta_{ij} = \sum_{k\neq i,j} f_C(r_{ik})\, g(\theta_{ijk}),
\]
where \(\theta_{ijk}\) is the angle between bonds \(i\!-\!j\) and \(i\!-\!k\). The angular function is defined as
\[
    g(\theta) = 1 + \frac{c^2}{d^2} - \frac{c^2}{d^2 + \bigl(h-\cos\theta\bigr)^2}.
\]
The model parameters listed in \Cref{table:tersoff_parameters} come from \cite{PhysRevLett.61.2879}, alternative values are found in \cite{PhysRevB.81.205441}.

\begin{table}[h]
    \centering\hfill
    \begin{tabular}{|c|c|}
        \hline
        A           & 1393.6 eV                                     \\ \hline
        B           & 346.74 eV                                     \\ \hline
        $\lambda_1$ & 3.4879 \AA{}                                   \\ \hline
        $\lambda_2$ & 2.2119 \AA{}                                   \\ \hline
    \end{tabular}\hfill
    \begin{tabular}{|c|c|}
        \hline
        $\beta$     & $1.5724 \times 10^{-7}$                       \\ \hline
        n           & 0.72751                                       \\ \hline
        c           & 38049                                         \\ \hline
        d           & 4.3484                                        \\ \hline
    \end{tabular}\hfill
    \begin{tabular}{|c|c|}
        \hline
        h           & -0.57058                                      \\ \hline
        R           & 1.95 \AA{}                                     \\ \hline
        D           & 0.15 \AA{}                                     \\ \hline
                    &                                                \\ \hline
    \end{tabular}\hfill

    \caption{Tersoff potential parameter values for carbon taken from \cite{PhysRevLett.61.2879}.}
    \label{table:tersoff_parameters}
\end{table}
\begin{figure}[!ht]
\centering
\setlength\unitlength{.8\textwidth}
\begin{picture}(1,.45)(.05,0)
\put(.075,.3){\includegraphics[width=.975\unitlength,trim=200 745 200 100,clip]{images/5_5_signed_radius_small.pdf}}
\put(.05,.02){\includegraphics[width=.95\unitlength,trim=0 15 10 10,clip]{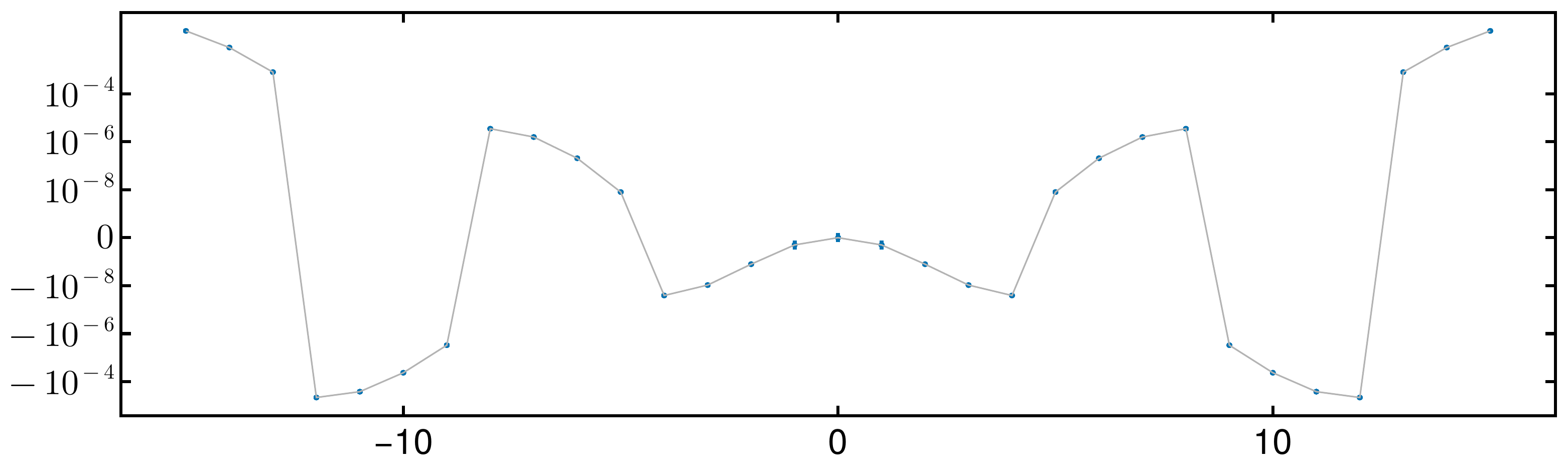}}
\put(0.42,0){\small Index of cross section}
\put(0,0.05){\rotatebox{90}{
  \begin{minipage}{0.25\unitlength}
    \centering \small
    Deviation from\\middle radius [\AA{}]
  \end{minipage}
}}
\end{picture}

   \caption{Same as \Cref{fig:nanotube_5_5_short_relative radii}, only replacing the harmonic potential \eqref{eq:energy_harmonic} by the Tersoff potential \eqref{eq:energy_tersoff}.}
    \label{fig:tersoff_radii}
\end{figure}

We use \Cref{theorem:minimizer} to validate numerical simulations using Tersoff's potential $\ETersoff$ with the original set of parameters from \cite{PhysRevLett.61.2879}.

\begin{theorem}[Capped armchair nanotube with Tersoff potential]
    \label{theorem:existence_armchair_Tersoff}
    Consider the approximate local minimizer $\bar{p}\in \mathbb{R}^{3n}$ of $\ETersoff$ from \eqref{eq:energy_tersoff} (with bonds corresponding to the (5,5)-armchair capped nanotube with $n=370$) shown in \Cref{fig:tersoff_radii}. There exists a local minimizer $p$ of $\ETersoff$ such that
    \[
        \|p - \bar{p}\|_\infty \le 4.6110\cdot10^{-11}\,\text{\AA}.
    \]
\end{theorem}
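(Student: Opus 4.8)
The plan is to prove this exactly as \cref{theorem:existence_armchair_harmonic,theorem:existence_zigzag_harmonic} were proved, namely as a direct application of \cref{theorem:minimizer} with $E=\ETersoff$. First I would compute an approximate minimizer $\hat p$ of $\ETersoff$ by BFGS, refine $(0,0,\hat p)$ by Newton's method applied to the map $F_{\hat p}$ of \eqref{eq:validation_map} to obtain an accurate approximate zero $\bar x=(\bar\tau,\bar\rho,\bar p)$, and take $A$ to be the floating-point inverse of $DF_{\hat p}(\bar x)$. Then, following the recipe spelled out after \cref{theorem:minimizer}: evaluate $\|A F_{\hat p}(\bar x)\|_\infty$ in interval arithmetic to read off $Y$; bisect on $r^*$ to find the largest radius with $Z(r^*)\approx\tfrac12$, where $Z(r^*)$ is the interval-arithmetic upper bound for $\|I-ADF_{\hat p}(X)\|_\infty$ over the box $X$ of radius $r^*$ centred at $\bar x$; set $r_{\min}=Y/(1-Z(r^*))$; and finally verify positivity of $(M_i(P)^{-1})_{ii}$ for $i=1,\dots,3n$ over the box $P$ of radius $r_{\min}$ centred at $\bar p$. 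I expect this to yield $r^*=9.9212\cdot10^{-8}$ \AA{}, a value of $Z(r^*)$ close to $\tfrac12$, and a correspondingly small $Y$, whence $r_{\min}=Y/(1-Z(r^*))\le4.6110\cdot10^{-11}$ \AA{}, which is the claimed statement. The gradient and Hessian of $\ETersoff$ entering $F_{\hat p}$, $DF_{\hat p}$ and the $M_i$ I would compute by interval-compatible forward-mode automatic differentiation rather than in closed form, since the bond-order factor $b_{ij}=(1+\beta^n\zeta_{ij}^n)^{-1/(2n)}$ and the angular function $g(\theta)$ make a hand-written Hessian cumbersome.

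There is one point genuinely new relative to the harmonic examples that must be dealt with: \cref{theorem:minimizer} requires $\ETersoff$ to be twice continuously differentiable on the validation ball, whereas the Tersoff cutoff $f_C$ is only $C^1$ --- its second derivative jumps at $r=R\pm D$. I would remove this obstruction by checking in interval arithmetic that for every atom pair $i,j$ and every $p$ with $\|p-\bar p\|_\infty\le r^*$ the distance $r_{ij}(p)$ stays strictly outside the transition interval $[R-D,R+D]=[1.8,2.1]$ \AA{}: bonded pairs satisfy $r_{ij}(p)<R-D$ and all remaining pairs satisfy $r_{ij}(p)>R+D$. On that ball every factor $f_C(r_{ij})$ is then identically $1$ or identically $0$, so $\ETersoff$ agrees there with a composition of $C^\infty$ functions and its Hessian is continuous, as \cref{theorem:minimizer} demands.

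The hard part will not be the mathematics but keeping the interval enclosures sharp enough. The Tersoff energy is a sum over atom pairs and triples of deeply nested compositions (exponentials, the fractional power in $b_{ij}$, and $g$ evaluated at angles $\theta_{ijk}$), so a crude interval evaluation of $DF_{\hat p}$ over a box of radius $r^*$ can overestimate so badly that $Z(r^*)\ge1$ and the contraction hypothesis of \cref{theorem:validation} fails; the dependency problem is considerably more severe than for $\Eharmonic$. The bisection on $r^*$ automatically retreats to a small enough radius, and the fact that $Z(r^*)\approx\tfrac12$ is already attainable at $r^*\approx10^{-7}$ \AA{} shows the enclosures are sufficiently tight in this instance; were they not, one would either shrink $r^*$ further or rewrite the energy to reduce repeated occurrences of the same variables before interval evaluation. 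A secondary, purely practical obstacle is the cost of the $(3n+6)$-dimensional rigorous linear algebra (the inversion producing $A$ and the $M_i^{-1}$ checks) for $n=370$, which is presumably why the shorter $N=370$ armchair tube is used for this example.
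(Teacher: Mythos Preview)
Your proposal is correct and follows the paper's approach exactly: the paper gives no explicit proof for this theorem, treating it (like \cref{theorem:existence_zigzag_harmonic}) as a routine application of \cref{theorem:minimizer} via the computational procedure spelled out after that corollary, and it does not even report the intermediate values $Y$ and $Z(r^*)$ in this case.

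Your treatment is in fact more careful than the paper's in one respect: you explicitly address the regularity issue created by the Tersoff cutoff $f_C$, whose second derivative is discontinuous at $r=R\pm D$, and you propose the natural fix of verifying in interval arithmetic that all pairwise distances over the validation box avoid the transition shell $[R-D,R+D]$. The paper is silent on this point (it simply applies \cref{theorem:minimizer} without comment), so your added check closes a gap the paper leaves implicit. Your remarks on interval overestimation and on using automatic differentiation for the Hessian also align with the paper's own comments earlier in \cref{section:interactions}.
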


\begin{proof}

For this example, the proof differs only in the treatment of the cutoff function. Following Tersoff's original formulation \cite{PhysRevB.37.6991}, the cutoff radius \(R\) is chosen so that the cutoff selects only the first-neighbor shell for the structures of
interest. For a given candidate configuration, we validate this rigorously. 
More precisely, we compute interval bounds for all interatomic distances using interval arithmetic and then evaluate the Tersoff cutoff function on these distance intervals. This confirms that first-neighbor distances remain in the region where the cutoff is active and all other
distances remain outside the interaction range. In the implementation of the
proof we therefore iterate only over the nearest neighbors of each atom when evaluating
the energy and its derivatives.

The same approach applies when the cutoff includes more distant neighbors. We first validate the corresponding neighbor list for each particle and then restrict the evaluation of the energy and its derivatives to those interactions (as in the proof of \Cref{theorem:lattice_minimizer_PBC}). Using neighbor lists exploits the sparsity induced by the cutoff by restricting the interval arithmetic evaluation to the interacting pairs, which reduces computational cost. Moreover, it lets us use forward-mode automatic differentiation \cite{Tucker2011} to obtain interval versions of the gradient and Hessian of the energy.

Apart from the treatment of the cutoff, the detailed computational procedure of the proof follows the same structure as the one described after \Cref{theorem:minimizer}.
In this case, we obtain a true local minimizer of $\ETersoff$ that lies within distance $4.6110\cdot10^{-11}$ \AA{} of $\bar p$ and is unique within distance $9.9212\cdot10^{-8}$ \AA{} of $\bar p$ in the norm $\|\cdot\|_\infty$.

\end{proof}

As in the previous section we start by examining oscillations in the tube radius.
\Cref{fig:tersoff_radii} shows that for the Tersoff potential, just like for the harmonic potential (compare \Cref{fig:nanotube_5_5_short_relative radii}),
the distance of the atoms from the tube axis oscillates around a value with amplitude decaying exponentially fast away from the caps.
The magnitude of the oscillation is comparable to the harmonic potential, but the structure is different with groups of four consecutive cross sections having either a larger or a smaller distance.
The central tube radius is \(3.53221\,\text{\AA}\).

Another qualitative difference between energy minimizers of the Tersoff energy \eqref{eq:energy_tersoff} and the harmonic energy \eqref{eq:energy_harmonic} is
that the Tersoff energy minimizer exhibits multiple distinct bond lengths.
In particular, as \Cref{fig:tersoff_angles} shows, there are two preferred bond length values:
the bond lengths within each cross section (\(1.4759\)\,\AA) differ from those between two cross sections (\(1.4707\)\,\AA).
We also observe two distinct bond angles, \(2.07296\) and \(2.08030\) (both of which are smaller than $2\pi/3$). The smaller angle is adjacent to the bonds within a cross section, while the larger one occurs everywhere else.

\begin{figure}[!ht]
    \centering
    \includegraphics[width=\textwidth]{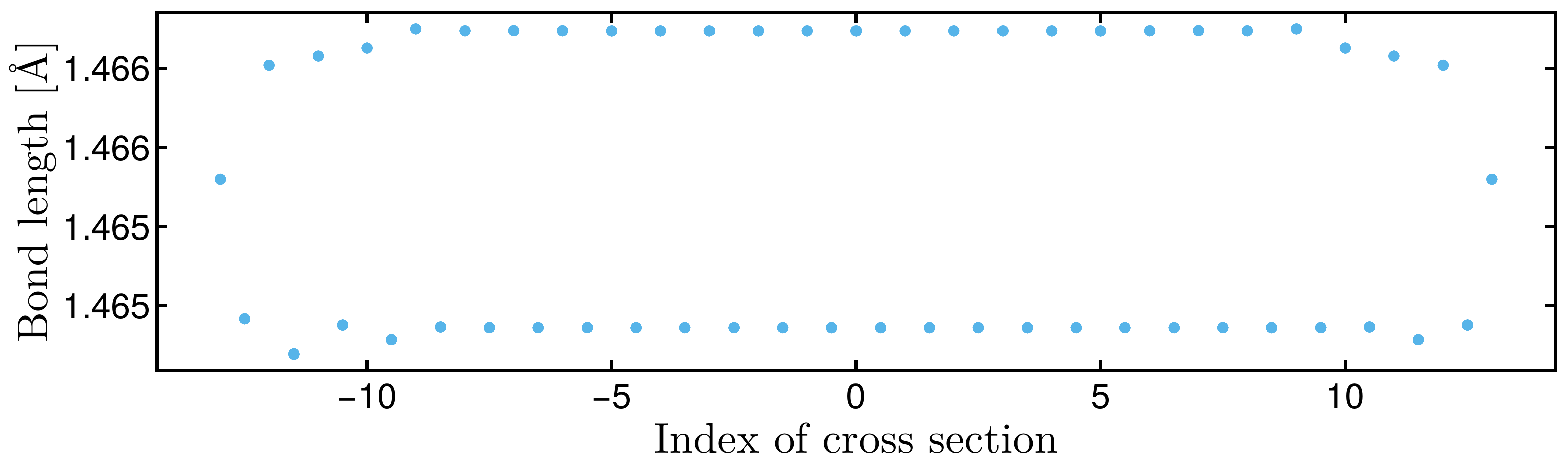}

    \includegraphics[width=\textwidth]{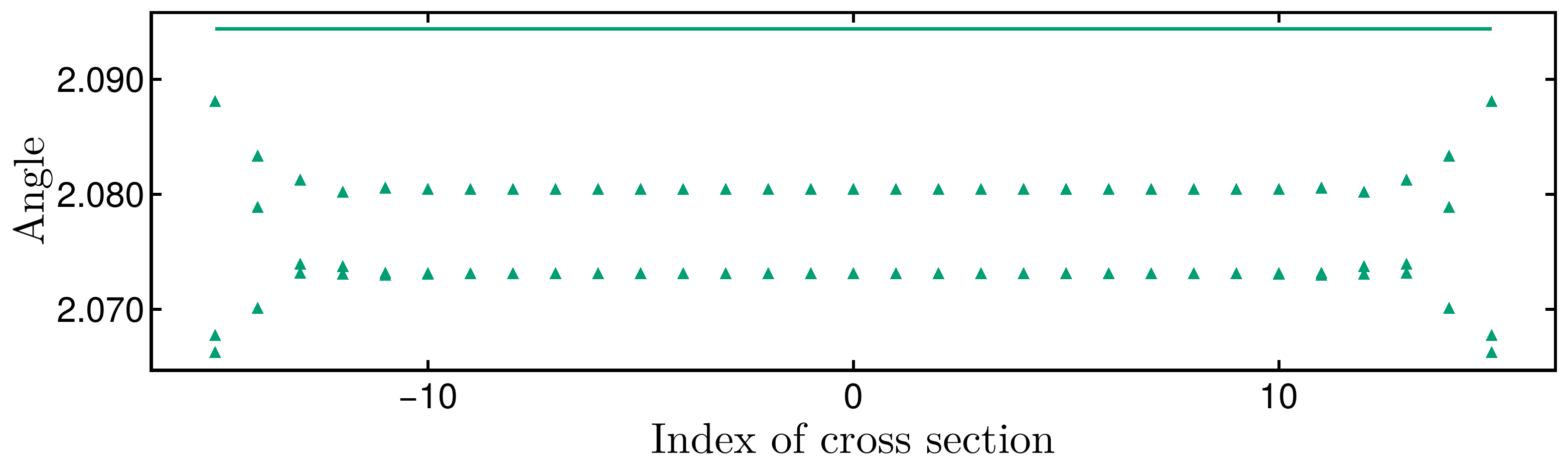}
    \setlength\unitlength\textwidth
    \begin{picture}(0,0)
    \put(-.96,.275){$2\pi/3$\,-}
    \end{picture}

\caption{Bond lengths (top) and bond angles (bottom) of the numerically validated approximation from \Cref{fig:tersoff_radii} of a local energy minimizer for a capped \((5,5)\)-armchair nanotube with 370 atoms, using the Tersoff potential.
Markers at noninteger indices correspond to bonds between two cross sections.}
    \label{fig:tersoff_angles}
\end{figure}

The Tersoff potential is an example of a parameterized function whose parameters are fit to measurement data. The choice of ansatz functions can lead here to qualitative differences. A potential of this type, with many parameters and composed functions, together with the high dimension of the problem, makes computer-based analysis necessary. A computer-assisted proof approach like the one we are presenting can be used to draw rigorous conclusions about local minimizers of the energy function using a numerical simulation.

As our final carbon nanotube example, consider the following non-physical Huber-norm potential
\begin{equation}
    \label{eq:energy_epsilon}
    \EHuber^\epsilon(p) \bydef  \sum_{ij \in B} \sqrt{\left(1 - r_{ij}/r_0 \right)^2 + \epsilon}
    \quad + \quad  \sum_{ij, jk \in B}
    \sqrt{\left(\tfrac{2\pi}{3} - \theta_{ijk} \right)^2 + \epsilon},
\end{equation}
which features a more pronounced minimum. Here the bond set \(B\) is defined analogously to the harmonic potential case.

The parameter \(\epsilon\) essentially controls the level of smoothness (as $\sqrt{a^2+\epsilon}$ tends to $|a|$ for $\epsilon\to0$). \Cref{fig:epsilon_radii} shows a numerically computed \((5,5)\)-armchair nanotube with \(n = 670\) atoms. It identifies oscillations in the cross-sectional radii that persist to a much greater distance from the caps than for the harmonic or the Tersoff potential. Again, the natural question raised by this observation is whether these oscillations are genuine features of the system or instead artifacts of the numerical computation. This too can be answered by rigorous validation.

\begin{theorem}[Capped armchair nanotube with Huber potential]
    \label{theorem:existence_armchair_Huber}
    Consider the approximate local minimizer $\bar{p}\in \mathbb{R}^{3n}$ of $\EHuber^\epsilon$ from \eqref{eq:energy_epsilon} for $\epsilon=10^{-5}$ (with bonds corresponding to the (5,5)-armchair capped nanotube with $n=670$) shown in \Cref{fig:epsilon_radii}. There exists a local minimizer $p$ of
    $\EHuber^\epsilon$ such that
    \[
        \|p - \bar{p}\|_\infty \le 1.8805\cdot10^{-37} .
    \]
\end{theorem}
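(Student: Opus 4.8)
The plan is to prove Theorem~\ref{theorem:existence_armchair_Huber} in exactly the same way as Theorems~\ref{theorem:existence_armchair_harmonic} and~\ref{theorem:existence_zigzag_harmonic}, namely as a direct application of Corollary~\ref{theorem:minimizer} to $f=F_{\hat p}$ with $\|\cdot\|=\|\cdot\|_\infty$, where now $E=\EHuber^\epsilon$ with $\epsilon=10^{-5}$ and the bond set $B$ is the connectivity of the capped $(5,5)$-armchair nanotube with $n=670$. First I would compute an approximate minimizer $\hat p$ of $\EHuber^\epsilon$ by BFGS and then refine the initialization $(0,0,\hat p)$ by Newton's method applied to the unfolded map $F_{\hat p}$ of~\eqref{eq:validation_map}, obtaining an approximate zero $\bar x=(\bar\tau,\bar\rho,\bar p)$. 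Because the ansatz functions $\sqrt{(1-r_{ij}/r_0)^2+\epsilon}$ and $\sqrt{(2\pi/3-\theta_{ijk})^2+\epsilon}$ are very flat and nearly singular near their minima, I would perform this step in quadruple floating point precision so that $\bar x$ is accurate to many more digits than the six digits used for the harmonic examples.

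Next I would take $A$ to be the numerical inverse of $DF_{\hat p}(\bar x)$ and evaluate, in interval arithmetic, the bounds $Y\ge\|AF_{\hat p}(\bar x)\|_\infty$ and $Z(r^*)\ge\sup_{x\in B_{r^*}(\bar x)}\|I-ADF_{\hat p}(x)\|_\infty$, choosing $r^*$ by bisection so that $Z(r^*)\approx\frac{1}{2}$. Here $r^*=8.1854\cdot10^{-14}$~\AA{} works and yields $r_{\min}=Y/(1-Z(r^*))\le1.8805\cdot10^{-37}$~\AA{}. Finally I would verify the hypothesis of Corollary~\ref{theorem:minimizer}, i.e.\ that $(M_i(P)^{-1})_{ii}>0$ for all $i=1,\dots,3n$, where $P$ is the interval box of radius $r_{\min}$ around $\bar p$, again in interval arithmetic; the Hessian $\nabla^2\EHuber^\epsilon$ entering $H$ would be produced by forward-mode automatic differentiation so that the composed square-root terms are differentiated rigorously. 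With all these inequalities verified, Theorem~\ref{theorem:validation} and Corollary~\ref{theorem:minimizer} give existence of a true local minimizer within distance $r_{\min}$ of $\bar p$, unique up to invariances within distance $r^*$, which is the claim.

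The main obstacle is the loss of smoothness encoded by the small parameter $\epsilon$: although $\EHuber^\epsilon$ is $C^\infty$ for $\epsilon>0$, the second derivatives of $a\mapsto\sqrt{a^2+\epsilon}$ scale like $\epsilon^{-1/2}$ near $a=0$, and, worse, their interval enclosures widen dramatically whenever the argument interval straddles $0$, which happens for essentially every bond and angle term since $\bar p$ is close to the ideal graphene-like configuration. In double precision the resulting overestimation in $DF_{\hat p}(X)$ and in the entries of $M_i(P)^{-1}$ would be too large to close the argument: $Z(r^*)$ would exceed $1$ and the diagonal entries of $M_i^{-1}$ could not be certified positive. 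I therefore expect to need octuple precision interval arithmetic for the validation step, which simultaneously sharpens the enclosures of the near-singular second derivatives and drives $Y$ down to the stated order $10^{-37}$~\AA{}. Once the precision is high enough, the verification is mechanical, and the remaining bookkeeping—aligning the tube axis with the $z$-axis and reading off the validated digits of the radii, bond lengths, and angles—is identical to the harmonic case.
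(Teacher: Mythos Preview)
Your proposal is correct and follows exactly the approach the paper takes: the paper does not spell out a separate proof for this theorem but states just before it that the validation is done via \cref{theorem:minimizer}, computing the numerical approximation in quadruple precision and carrying out the interval-arithmetic validation in octuple precision, which is precisely what you describe. Your additional heuristic about the $\epsilon^{-1/2}$ scaling of the second derivatives of $a\mapsto\sqrt{a^2+\epsilon}$ is a helpful justification for why the higher precision is needed, and is fully consistent with the paper's brief remark.
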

\begin{proof}
For this final example, the validation follows the same procedure as in our previous CAPs. In this case, a true local minimizer of $\EHuber^\epsilon$ lies within distance $1.8805\cdot10^{-37}$ of $\bar p$ and is unique within distance $8.1854\cdot10^{-14}$ of $\bar p$ in the norm $\|\cdot\|_\infty$. The only difference is the numerical precision. For
simulations based on the energy function~\eqref{eq:energy_epsilon}, we use
quadruple precision to compute the numerical approximation, while the
validation step uses octuple precision. Most modern interval arithmetic libraries support multiple-precision floating
point arithmetic, so our method can take advantage of higher precision when
needed.
\end{proof}

\begin{figure}[!ht]
    \hspace*{.038\textwidth}\includegraphics[width=.963\textwidth,trim=160 720 257 190,clip]{images/5_5_signed_radius.pdf}
    \hspace*{.04\textwidth}\includegraphics[width=.96\textwidth,trim=0 30 0 30]{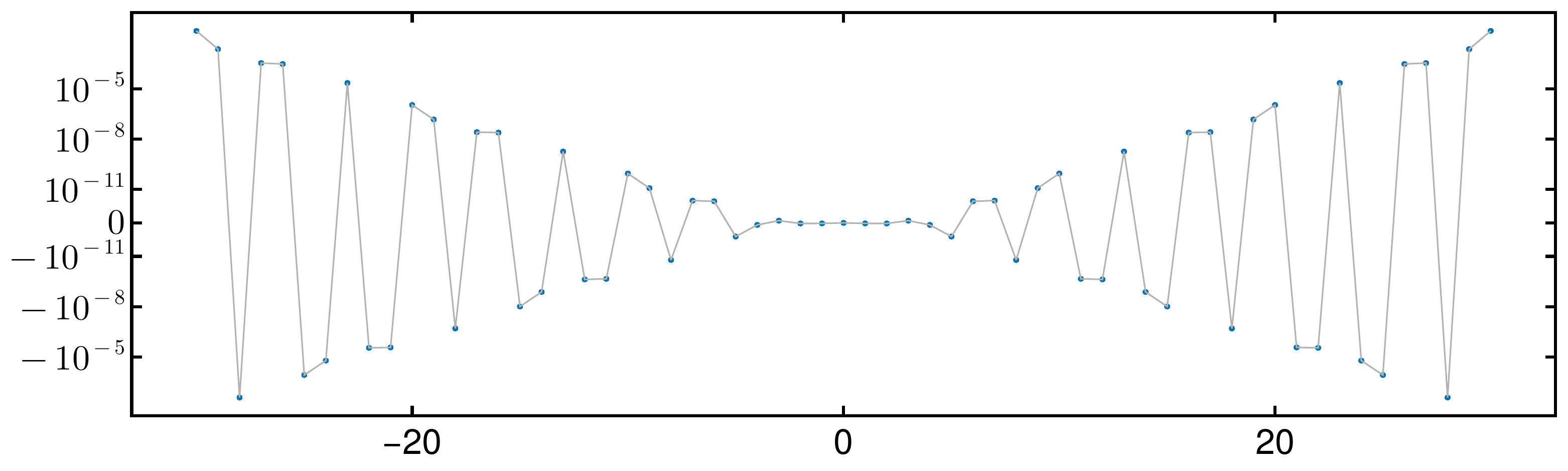}
   
\setlength\unitlength{\textwidth}
\begin{picture}(1,0)(0,0)

  \put(0.55, 0.02){
    \makebox(0,0)[t]{\colorbox{white}{ \strut Index of cross section}}
  }

  \put(-.02, 0.05){
    \rotatebox{90}{
      \colorbox{white}{
        \begin{minipage}{0.19\textwidth}
          \centering
          Deviation from\\middle radius 
        \end{minipage}
      }
    }
  }

\end{picture}

\caption{Same as \Cref{fig:nanotube_5_5_relative radii}, only replacing the harmonic potential \eqref{eq:energy_harmonic} by the Huber-type potential \eqref{eq:energy_epsilon} with $\epsilon=10^{-5}$.}
    \label{fig:epsilon_radii}
\end{figure}

Realistic physical potentials should not exhibit singularities such as discontinuities or nondifferentiabilities (natura non facit saltus),
which is why we employed rigorous numerical techniques that apply to differentiable energies.
However, it is conceivable that physical potentials can have different degrees of smoothness such as more or less pronounced minima,
which we emulate by varying the parameter $\epsilon$ in \eqref{eq:energy_epsilon}.
This does indeed have a qualitative influence on the geometry of the nanotube:
For the harmonic and Tersoff potentials, the nanotube radius quickly reaches a constant equilibrium value with increasing distance from the caps.
This does not happen for the less smooth potential with small $\epsilon$ (see \Cref{fig:epsilon_radii}); the smaller $\epsilon$, the longer-lived are the oscillations.
In fact, in the limit $\epsilon\to0$ it is expected that the radii of the transversal atom rings oscillate at a fixed amplitude along the full length of the capped nanotube.
Indeed, under the simplifying assumptions of fixed prescribed bond lengths and dihedral $D_5$-symmetry (properties that are exhibited anyway by our rigorously found nanotube configurations),
such behavior was shown in \cite[Thm.\,3]{Schrief22} for infinitely long \((5,5)\)-armchair nanotubes that contain an atom ring with perturbed radius.

\section{fcc Lattice}
\label{section:lattice}
Our second application comes from crystallization. We begin by addressing the role of boundary conditions. In our implementation, nanotube structures were simulated in free space, corresponding to open-boundary conditions. While this free-space, open-boundary setting is relevant, for example, when classifying and cataloging finite clusters under different interaction potentials (cf.\ \cite{Wales_2004, Wales_97, CambridgeClusterDatabase}), crystals are typically modeled as infinite lattices with imposed boundary conditions (PBCs) to avoid boundary effects and reduce computational cost. Our first examples are in a free-space framework. We initially anticipated that boundary effects would significantly influence the resulting crystallographic structures. However, contrary to this expectation, both our numerical simulations and supporting analytical results indicate that such effects are qualitatively negligible. This observation motivates the focus of the presentation that follows. For completeness, we conclude with an example incorporating periodic boundary conditions. 

We consider \(n\) atoms interacting through the Lennard--Jones potential and study geometry optimization for an fcc lattice configuration. For \(n\) atoms at positions \(p_1,\dots,p_n \in \mathbb{R}^3\), let
\(r_{ij} = \lVert p_i - p_j \rVert\) be the distance between atoms \(i\) and \(j\). We work with a nondimensional version of the Lennard--Jones energy per atom,
obtained by setting the parameters in the standard Lennard--Jones potential
equal to one. It is given by
\begin{equation}\label{eq:LJ-potential}
E_{\mathrm{LJ}}(p_1,\dots,p_n)
\bydef \frac{4}{n} \sum_{1 \le i < j \le n}
\left(
\frac{1}{r_{ij}^{12}}
-
\frac{1}{r_{ij}^{6}}
\right),
\end{equation}
where $r_{ij} = \|p_i - p_j\|$.

\begin{figure}[!ht]
  \centering
\begin{minipage}[b]{0.35\textwidth}
  \includegraphics[width=\linewidth,
    trim=0 12mm 0 12mm,clip]{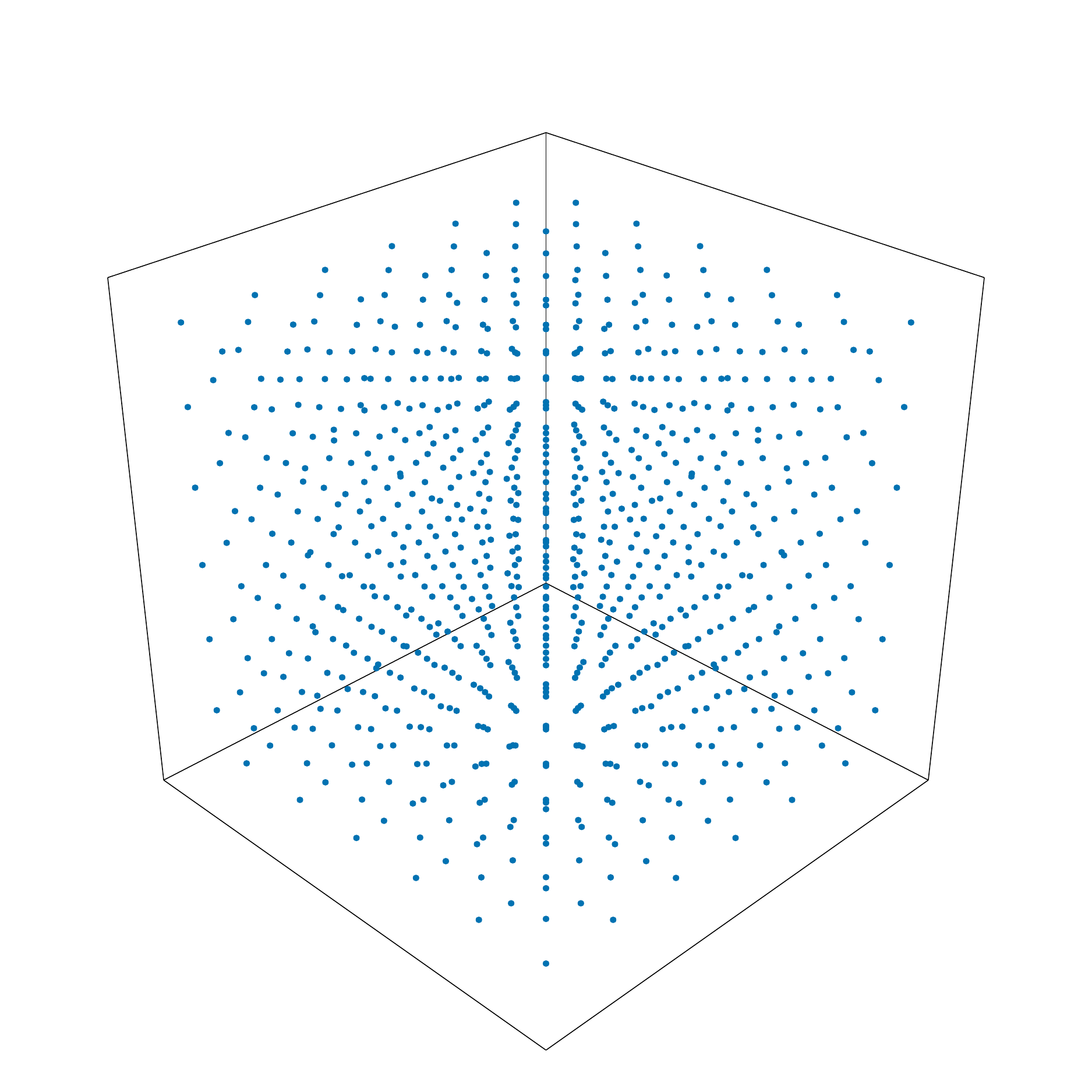}
\end{minipage}
\begin{minipage}[b]{0.25\textwidth}
  \includegraphics[width=\linewidth,
    trim=0 15mm 0 15mm,clip]{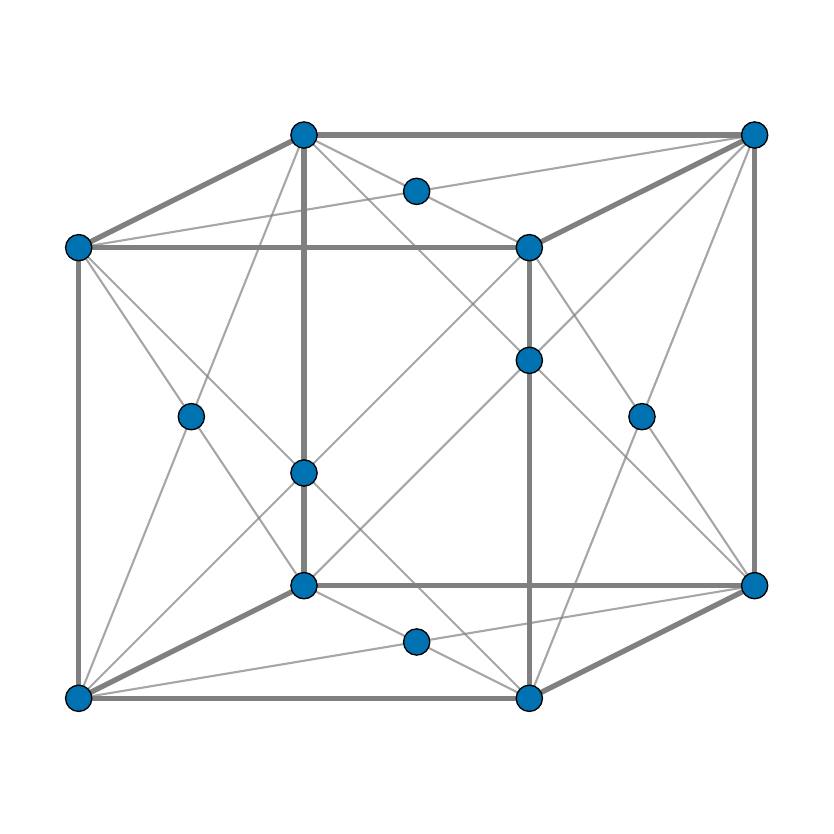}
\end{minipage}
  \caption{Left: approximate local minimizer of
the Lennard--Jones energy \(E_{\mathrm{LJ}}\) from \eqref{eq:LJ-potential},
corresponding to an fcc lattice block of \(864\) atoms. Right:
face-centered cubic (fcc) reference cell, with line segments drawn between
atoms to make the three-dimensional structure easier to see.}
    \label{fig:lattice_block}
\end{figure}
We begin with a numerically computed local minimizer that represents a perfect fcc crystal and, using our framework, obtain 
a computer-assisted proof that there exists a true local minimizer of the 
Lennard--Jones energy close to this configuration. Computer-assisted proofs of the theorems in this section follow the general
strategy described in Section~\ref{section:CAPs}. 
\begin{theorem}[fcc local minimizer]
    \label{theorem:lattice_minimizer}
Consider the approximate local minimizer \(\bar{p} \in \mathbb{R}^{3\cdot 864}\)
of the Lennard--Jones energy \(E_{\mathrm{LJ}}\) from \eqref{eq:LJ-potential}
corresponding to an fcc lattice block of \(864\) atoms 
shown in~\Cref{fig:lattice_block}. 
There exists a true local minimizer $p$ of
\(E_{\mathrm{LJ}}\) such that
    \[
        \|p - \bar{p}\|_\infty \le 1.4797\cdot 10^{-11},
    \]
and this
minimizer is unique within distance \(9.9994\cdot 10^{-8}\) of \(\bar{p}\) in
the norm \(\|\cdot\|_\infty\).
\end{theorem}
\begin{proof}
    The proof is analogous to that of Theorem~\ref{theorem:existence_armchair_harmonic}
and follows the same validation procedure.
\end{proof}

\begin{figure}[!ht]
  \centering
    \begin{minipage}[b]{0.25\textwidth}
    \includegraphics[width=\linewidth,clip]{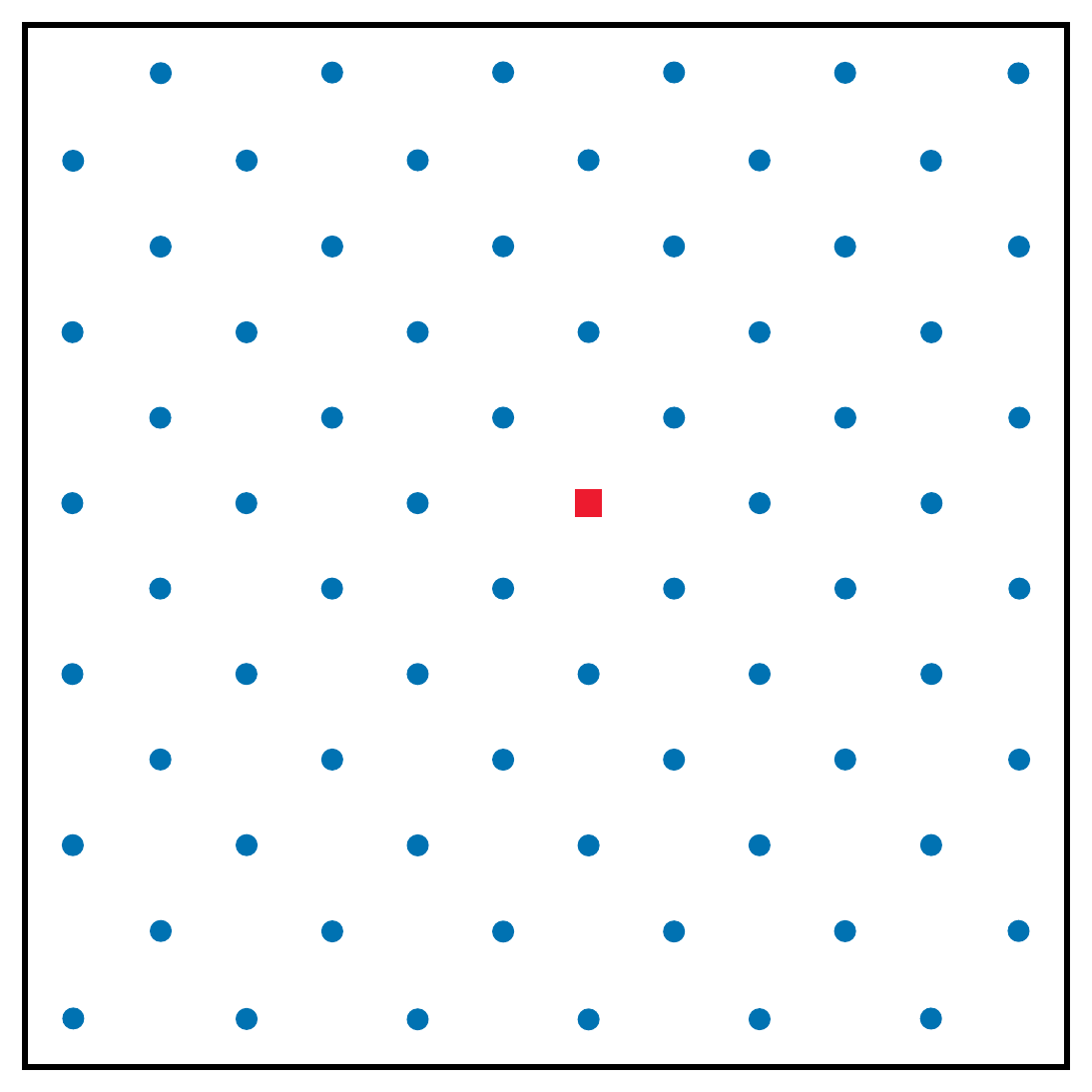}
  \end{minipage}
      \begin{minipage}[b]{0.7\textwidth}
    \centering
    \includegraphics[width=\linewidth]{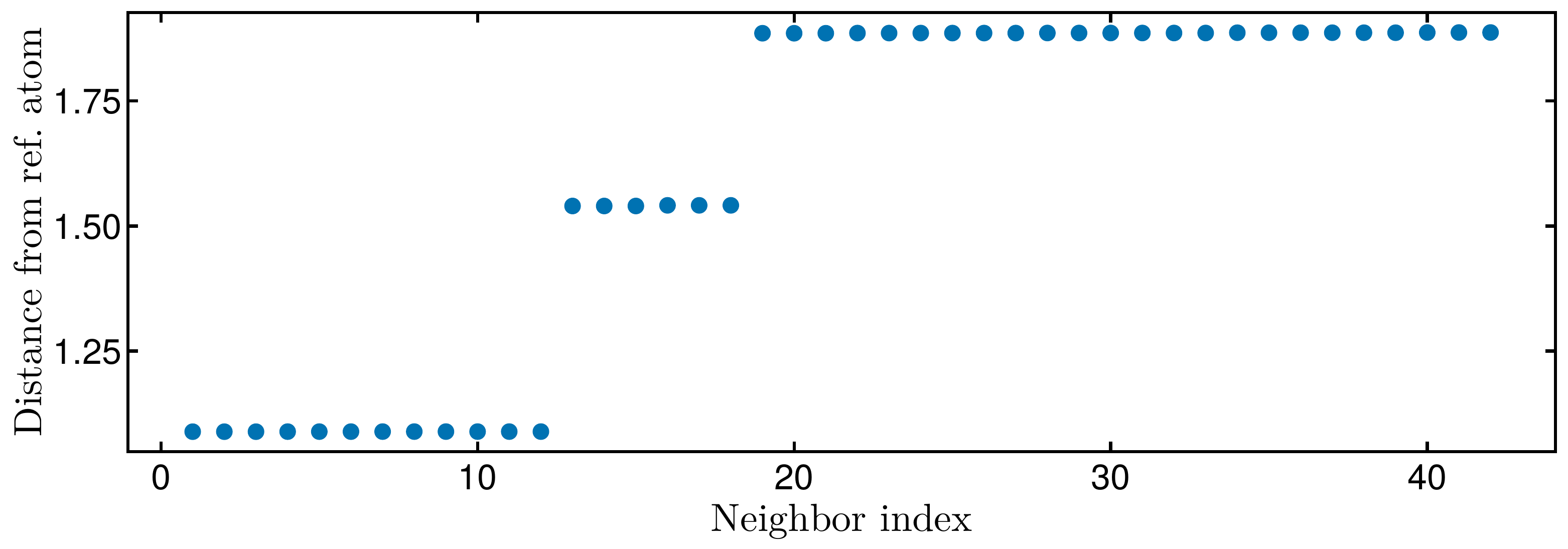}
  \end{minipage}

  \caption{Slice of the fcc lattice around a reference atom and neighborhood-distance
diagram. Left: slice of the lattice block in the local minimizer from
Theorem~\ref{theorem:lattice_minimizer}, with the reference atom indicated by a
red rectangle. Right: neighborhood-distance diagram for the reference atom.
Each point represents a neighboring atom. The vertical axis shows its distance
from the reference atom, and the horizontal axis orders atoms in the vicinity
by their distance from this reference atom. The diagram shows the fcc shell
structure in the first three shells, with 12 neighbors in the first shell, 6
in the second, and 24 in the third.}
    \label{fig:perfect_coordination}
\end{figure}
\begin{figure}[!ht]
  \centering
    \begin{minipage}[b]{0.25\textwidth}
    \includegraphics[width=\linewidth,clip]{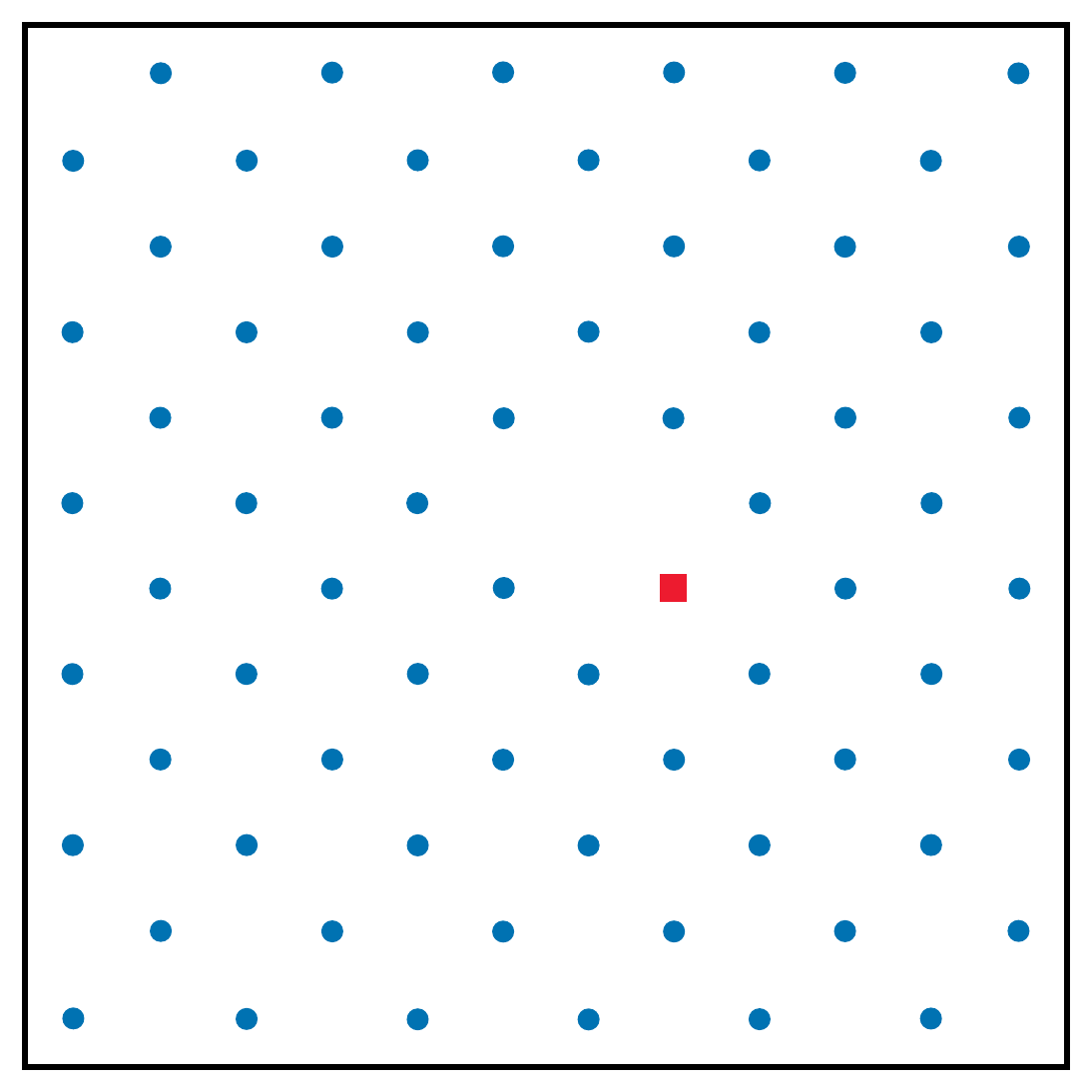}
  \end{minipage}
  \begin{minipage}[b]{0.7\textwidth}
    \centering
    \includegraphics[width=\linewidth]{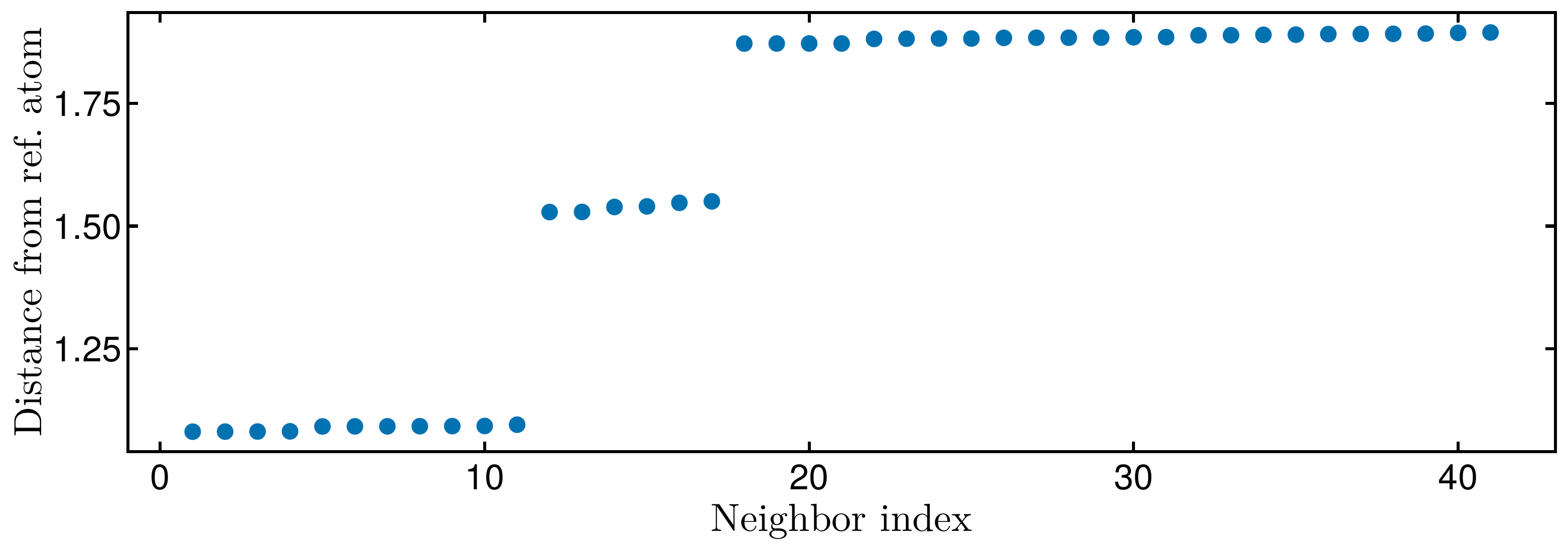}
  \end{minipage}

  \caption{Slice of the fcc lattice with a vacancy and neighborhood-distance diagram.
Left: slice of the lattice block in the local minimizer with a single vacancy defect from Theorem~\ref{theorem:lattice_vacancy}, with the reference atom indicated by a red rectangle and the vacant site visible next to it.
Right: neighborhood-distance diagram for the reference atom. Compared with the perfect fcc case, the first shell of the reference atom has one fewer neighbor and some atoms from the second shell move closer.}
    \label{fig:vacancy_coordination}
\end{figure}

\begin{figure}[!ht]
  \centering
  \begin{minipage}[b]{0.25\textwidth}
    \includegraphics[width=\linewidth,clip]{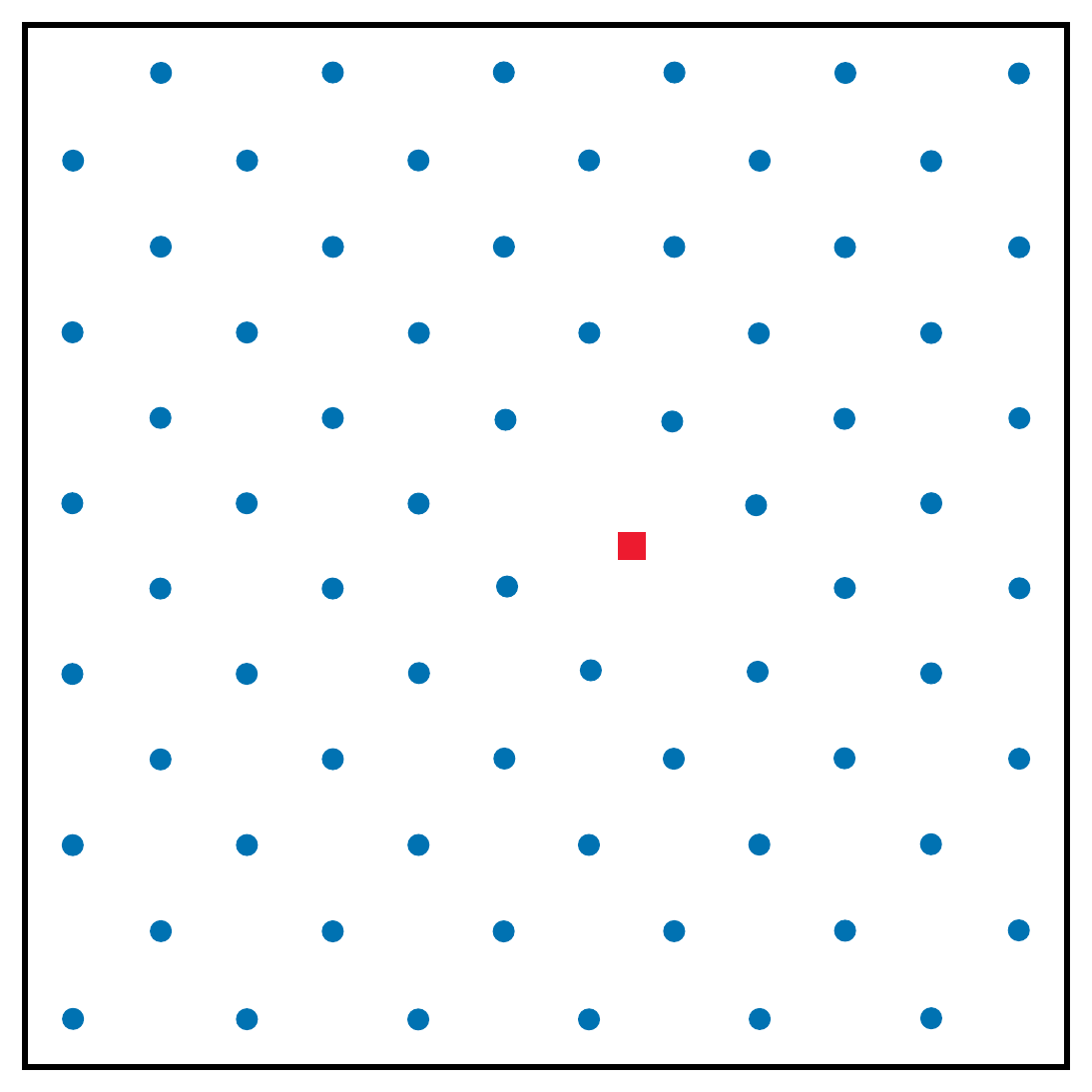}
  \end{minipage}
    \begin{minipage}[b]{0.7\textwidth}
    \centering
    \includegraphics[width=\linewidth]{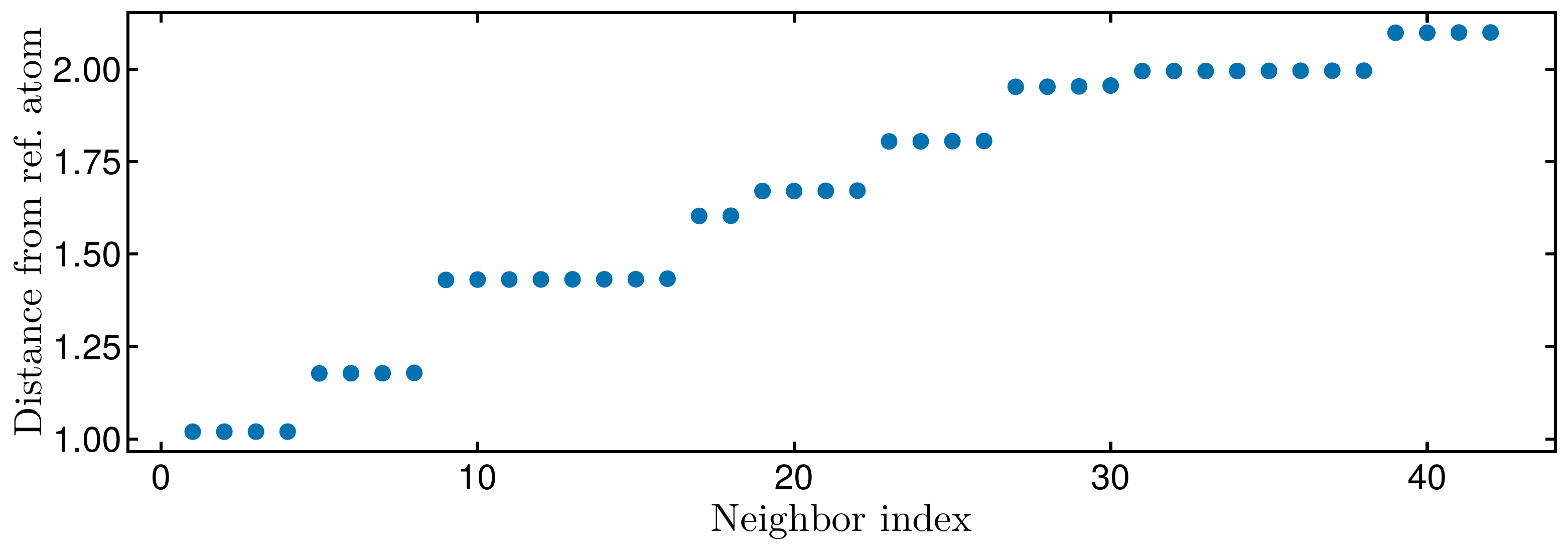}
  \end{minipage}
  \caption{Slice of the fcc lattice at the saddle and neighborhood-distance diagram.
Left: slice of the lattice block in the saddle configuration from
Theorem~\ref{theorem:lattice saddle}, with the reference atom indicated by a
red rectangle and the two vacancy sites visible on opposite sides of it.
Right: neighborhood-distance diagram for the reference atom. The fcc shell structure around the reference
atom is lost in this configuration.}
    \label{fig:saddle_coordination}
\end{figure}

Starting from the local minimizer of the perfect fcc lattice in
Theorem~\ref{theorem:lattice_minimizer}, we next create a defective
configuration by removing the atom whose position is closest to the center of
the lattice block. After numerical relaxation
this gives an approximate local minimizer with one vacancy, and we validate
this configuration using our method. At the scale of the
whole lattice block, this defective configuration is visually indistinguishable from the perfect lattice in \Cref{fig:lattice_block}, which is why we do not display it
separately. Instead, in Figure~\ref{fig:vacancy_coordination} we show a slice
of the block around the defect.
\begin{theorem}[fcc vacancy defect]
    \label{theorem:lattice_vacancy}
Consider the approximate local minimizer \(\bar{p} \in \mathbb{R}^{3\cdot 863}\)
of the Lennard--Jones energy \(E_{\mathrm{LJ}}\) from \eqref{eq:LJ-potential},
corresponding to the fcc lattice block with a single vacancy defect. A slice of
this lattice block where the defect can be seen is shown
in~\Cref{fig:vacancy_coordination}.
There exists a true local minimizer $p$ of
\(E_{\mathrm{LJ}}\) such that
    \[
        \|p - \bar{p}\|_\infty \le 7.9293\cdot 10^{-12},
    \]
and this
minimizer is unique within distance \(9.9994\cdot 10^{-8}\) of \(\bar{p}\) in
the norm \(\|\cdot\|_\infty\).

\end{theorem}
\begin{proof}
    The proof is analogous to that of Theorem~\ref{theorem:existence_armchair_harmonic}
and follows the same validation procedure.
\end{proof}

We then construct a second defective configuration as follows. Starting from
the local minimizer of the perfect fcc lattice, we identify the atom that is
the nearest neighbor of the atom removed to create the first vacancy and
remove this neighbor instead. After numerical relaxation, this configuration
is an approximate local minimizer with a vacancy on the neighboring site, and
we validate it analogously to Theorem~\ref{theorem:lattice_vacancy}. Starting
from these two validated local minimizers, we use the nudged elastic band (NEB) 
method \cite{Wales_2004} to compute an approximate saddle point on the Lennard--Jones energy
landscape. We then apply our computer-assisted
method to validate this saddle point as a true saddle. In fact, this represents a natural application for computer-assisted validation: In NEB computations, one simultaneously minimizes the energy of a number of states plus the energy of a sequence of fictitious springs that connect the states and thereby model an efficient transition path between the two end states. Since the spring constants and the path discretization are often chosen heuristically, rigorous validation could help distinguish converged transition states from artifacts of these choices.

\begin{theorem}[fcc saddle point]
    \label{theorem:lattice saddle}
Consider the approximate saddle point \(\bar{p} \in \mathbb{R}^{3\cdot 863}\)
of the Lennard--Jones energy \(E_{\mathrm{LJ}}\) from \eqref{eq:LJ-potential},
corresponding to a transition between the two vacancy configurations.
A slice of this configuration is shown in Figure~\ref{fig:saddle_coordination},
where the reference atom lies between the two vacancy sites. 
There exists a true saddle point $p$ of
\(E_{\mathrm{LJ}}\) such that
    \[
        \|p - \bar{p}\|_\infty \le 1.4409\cdot 10^{-10},
    \]
and this
saddle point is unique within distance \(9.9994\cdot 10^{-8}\) of \(\bar{p}\) in
the norm \(\|\cdot\|_\infty\).
\end{theorem}
\begin{proof}

The existence proof is analogous to that of Theorem~\ref{theorem:existence_armchair_harmonic}
and follows the same validation procedure. To prove that the validated critical point \(p\) is a saddle point, it suffices to show that the Hessian of \eqref{eq:LJ-potential} has a negative eigenvalue. For this, we consider the modified Hessian \(H(p)\) from \Cref{theorem:minimizer}, where the translational and rotational zero eigenvalues have been removed. Let \(P\) denote the closed ball of radius \(1.4409\cdot 10^{-10}\) centered at \(\bar p\) in the supremum norm. We compute an approximate eigenvector matrix \(Q\) for \(H(\bar p)\) and use interval arithmetic to rigorously evaluate the interval matrix
\[
Q^{-1} H(P)\,Q .
\]
For each \(q \in P\), the matrix \(Q^{-1}H(q)Q\) has the same eigenvalues as \(H(q)\). Since \(Q^{-1}H(P)Q\) is nearly diagonal, Gershgorin’s theorem \cite{GolubVanLoan2013} yields tight bounds on the spectrum. In particular, one of the Gershgorin discs lies strictly in the negative half-line and is disjoint from all the others. Hence \(H(q)\) has a negative eigenvalue for all \(q \in P\). Since the validated critical point \(p\) lies in \(P\), it follows that \(H(p)\) has a negative eigenvalue.
\end{proof}

In the shell plots presented in
\Cref{fig:perfect_coordination,fig:vacancy_coordination,fig:saddle_coordination}, we show neighborhood-distance diagrams with error bars given by the proofs for the
corresponding critical points of the energy. The error bars are so small that
they are visually indistinguishable from the plotted points. As in the nanotube
examples, every visible feature in these plots corresponds to a quantitative
statement that we have proved. In the vacancy configuration, the first shell of
the reference atom has one fewer neighbor (the vacancy), and some atoms in the
second shell move noticeably closer to the reference atom. In the saddle
configuration, the fcc shell structure around this atom is lost, since the atom
lies between the two vacancy positions. In the same spirit, one could study
boundary effects by choosing reference atoms near the boundary.

When modeling an infinite lattice, periodic boundary conditions (PBCs) are the standard choice for avoiding boundary effects and reducing computational cost. For our final example, we impose PBCs using the minimum image convention with a non-bonded cutoff \cite{Leach2001MolecularModelling}. The energy is then given by
\begin{equation}\label{eq:LJ-potential_PBC}
E_{\mathrm{LJ}}^{\mathrm{pbc}}(p_1,\dots,p_n)
\bydef
\frac{4}{n}
\sum_{1\le i<j\le n}
\mathbf{1}_{\{r_{ij}^{\mathrm{pbc}}<r_c\}}
\left(
\frac{1}{(r_{ij}^{\mathrm{pbc}})^{12}}
-
\frac{1}{(r_{ij}^{\mathrm{pbc}})^{6}}
\right),
\end{equation}
where
\begin{equation}\label{eq:min_image_distance}
r_{ij}^{\mathrm{pbc}}
\bydef
\min_{(m_1,m_2,m_3)\in\mathbb{Z}^3}
\bigl\lVert
p_j-p_i+(m_1L,m_2L,m_3L)
\bigr\rVert.
\end{equation}
We use a box of side $L=6$ and a cutoff of $r_c=2.5$, as suggested in \cite{Leach2001MolecularModelling}. 
\begin{theorem}[fcc local minimizer PBC]
    \label{theorem:lattice_minimizer_PBC}
Consider the approximate local minimizer \(\bar{p} \in \mathbb{R}^{3\cdot 256}\) of the Lennard--Jones energy \(E_{\mathrm{LJ}}^{\mathrm{pbc}}\) from \eqref{eq:LJ-potential_PBC}, representing an fcc block of \(256\) atoms in a cubic cell of length \(6\) with cutoff \(r_c=2.5\) (the numerical approximation \(\bar{p}\) is available in the code repository \cite{github_codes}). There exists a true local minimizer $p$ of
\(E_{\mathrm{LJ}}^{\mathrm{pbc}}\) such that
    \[
        \|p - \bar{p}\|_\infty \le 2.19826 \cdot 10^{-14}.
    \]
\end{theorem}
\begin{proof}
    The treatment of periodic boundary conditions and the cutoff is analogous to the Tersoff case in \Cref{theorem:existence_armchair_Tersoff}. Starting from an approximate local minimizer, we consider a ball of radius \(r\) centered at the approximation and evaluate the interparticle distances under the minimum-image convention and the cutoff. This yields a neighbor list for each particle. Using these lists, we evaluate the gradient and Hessian of the energy. The remainder of the proof follows the validation procedure described after \Cref{theorem:minimizer}, with one modification to the extended zero-finding problem. Under periodic boundary conditions the rotational invariance is no longer present; we therefore drop the rotational constraints and work with an extended map that fixes only translations. In this case,
\(
    F_{\bar{p}}^{\mathrm{pbc}}: \mathbb{R}^{3} \times \mathbb{R}^{3n} \to \mathbb{R}^{3} \times \mathbb{R}^{3n}
\)
is given by
\begin{equation}
    F_{\bar{p}}^{\mathrm{pbc}}(\tau,p)
    \bydef
    \begin{bmatrix}
        T_{1} \cdot  \left( \bar{p} - p \right)
        \\
        T_{2} \cdot \left( \bar{p} - p \right)
        \\
        T_{3} \cdot \left( \bar{p} - p \right)
        \\
        \nabla E(p)
        +
        \sum_{i=1}^{3} \tau_i T_i
    \end{bmatrix}.
\end{equation}
Here $T_i \in \mathbb{R}^{3n}$ for $i=1,2,3$ denote the infinitesimal generators of translations. For this example, a true local minimizer of $E_{\mathrm{LJ}}^{\mathrm{pbc}}$ lies within distance $2.19826 \cdot 10^{-14}$ of $\bar p$ and is unique within distance $9.9994\cdot10^{-07}$ of $\bar p$ in the norm $\|\cdot\|_\infty$.
\end{proof}  

Although here we restrict our attention to truncated energies, the computer-assisted approach can be extended to the non-truncated setting, where the energy includes contributions from an infinite number of particles. For example, once a validated local minimizer has been obtained, a posteriori corrections to the truncated energy that account for the missing long-range contributions (see, for example, \cite{Leach2001MolecularModelling}) can be implemented directly with interval arithmetic.

Long-range interactions can also be included explicitly in the energy as part of the formula. For instance, lattice sums \cite{Stillinger_2001} can be added to the energy and its derivatives. Along the same lines, Ewald sum-type methods incorporate long-range interactions using the fast Fourier transform \cite{Essmann_1995, Isele_2012, Takahashi2018} and may become rigorous with a verified FFT implementation and a rigorous treatment of aliasing error \cite{MR3709329,MR4930548}.

\section{Conclusions, Limitations, and Challenges Ahead}
 \label{section:conclusions}
Geometry optimization is studied in both the mathematical and broader scientific communities, where the search for local minimizers and transition states more often than not relies on numerical simulations. We introduce an approach that takes these numerical approximations and turns them into rigorous existence results for true solutions. We have illustrated our approach in the context of capped nanotubes and fcc crystallization. While the results and presentation here are geared toward applied mathematicians, we also see potential for use in wider scientific communities \cite{Mehta_2015, Mehta_2014}. 

We conclude by discussing the current limitations of the method and the challenges that remain. First, there is a fundamental mathematical limitation. These tools are currently unable to access or confirm the ground state (the global minimizer). As with any numerical tool, the complexity and nonconvexity of the energy landscapes entail the need for well-prepared initial conditions that reflect the overall geometry of the final proven state.

For the practitioner, we see two important limitations of our approach. A first limitation lies in the adaptation and implementation of the interatomic potential to interval arithmetic. In particular, the functions defining the gradient and the Hessian of the energy must be implemented so that they can be evaluated using interval arithmetic. In some languages, such as Julia, a single function can be written so that it accepts both floating-point and interval arguments, but in many cases, this will require a special interval arithmetic implementation. The implementation of a CAP can be simplified by computing the Hessian matrix of the energy via automatic differentiation. Libraries for interval arithmetic, such as \textit{IntervalArithmetic.jl} in Julia~\cite{IntervalArithmetic.jl} and \textit{INTLAB} in MATLAB~\cite{rump_intlab}, support rigorous derivative computation through forward-mode differentiation~\cite{Tucker2011}. We expect this additional implementation effort of a CAP (making the energy and its derivatives compatible with interval arithmetic) will decrease as artificial intelligence becomes more common in scientific computing workflows.

At the same time, contemporary potentials used today are highly empirical and increasingly the result of machine learning algorithms \cite{10.1063/5.0158783,10.1063/5.0005084}. In these cases, the applicability of our method depends on the compatibility with interval arithmetic. It is possible that the algebraic structure and complexity of the potential influence the amount of overestimation in the resulting interval bounds, and this may require higher precision arithmetic or extended interval arithmetic techniques, such as affine arithmetic \cite{figueiredo2004affine}. 
 
A second limitation concerns the error bounds associated with quantities computed at a minimizer, which derive from the foundational results of \Cref{section:CAPs}. Although these bounds can be extremely small, their significance must be interpreted relative to the physical parameters of the potential and the quantities under consideration. In particular, the resulting error estimates may occur at a level of precision far beyond that of experimentally accessible or physically meaningful scales, especially given that the parameters of the potential themselves are known only to finite accuracy.
That said, as illustrated in several examples presented here, there are situations in which these error bounds attain physical relevance. In addition, they provide a systematic means of distinguishing nonconverged simulations---or more generally, spurious numerical outputs---from genuine local minimizers and can also be used to rigorously differentiate between distinct critical points. Consequently, while standard computational approaches may often suffice in practice, there remain cases in which a more refined and rigorous methodology can offer meaningful additional insight.

\section*{Acknowledgments}
The authors are deeply indebted to the referees whose detailed comments substantially improved the manuscript.
They would also like to thank Vincent Meunier for his comments and suggestions. MA and RC thank Jean-Philippe Lessard for several helpful discussions. 

BW and RC were supported by the Deutsche Forschungsgemeinschaft (DFG, German Research Foundation) under Germany’s Excellence Strategy EXC 2044-390685587, Mathematics Münster: Dynamics-Geometry-Structure.
BW further gratefully acknowledges support by the DFG, priority program SPP 2256, Grant WI 4654/2-2, and by the Department of Mathematics and Statistics at McGill University. RC acknowledges support of NSERC (Canada) via its Discovery Grants Program and the support of the University of Münster as a Münster Research Fellow.

\bibliographystyle{siamplain}
\bibliography{references}

\end{document}